\providecommand{\doi}[1]{doi: \href{http://dx.doi.org/#1}{\nolinkurl{#1}}}
\def\eg{e.g., }
\def\cf{cf.  }
\def\ie{i.e., }
\def\wrt{w.r.t. }
\renewcommand{\b}{{}^{\bot}}
\newcommand{\ifff}{if and only if }
\newcommand{\dom}{\textnormal{dom}}
\newcommand{\pow}{\textnormal{pow}}
\newcommand{\sP}{\mathsf{P}}
\newcommand{\sz}{\mathsf{Z}}
\newcommand{\eqdef}{\stackrel{\text{\tiny \textnormal{DEF}}}{=}}
\newcommand{\reldef}{\stackrel{\text{\tiny \textnormal{DEF}}}{\Longleftrightarrow}}
\newcommand{\lb}{\big\{ \; }
\newcommand{\rb}{\; \big\}}
\newcommand{\st}{ \ \ \big{|} \ \ }
\newcommand{\bA}{\mathbb{A}}
\newcommand{\ort}{\bot}
\newcommand{\cA}{\mathcal{A}}
\newcommand{\cF}{\mathcal{F}}
\newcommand{\cN}{\mathcal{N}}
\newcommand{\cP}{\mathcal{P}}
\newcommand{\sO}{\mathsf{O}}
\newcommand{\sr}{\mathsf{R}}
\newcommand{\ssf}{\mathsf{p}}
\newcommand{\ssg}{\mathsf{o}}
\newcommand{\spo}{\, \lhd_\cO\,}
\newcommand{\sco}{\, \LHD_\cO\,}
\newcommand{\scd}{\, \LHD_{\overline{\cO}} \; }
\newcommand{\cO}{\mathcal{O}}
\newcommand{\orth}[2]{\, #1\, \bot \, #2 \,}
\newcommand{\arr}{\longrightarrow}
\newtheorem{theorem}{Theorem}[section]   
\newtheorem{lemma}[theorem]{Lemma}
\newtheorem{proposition}[theorem]{Proposition}
\newtheorem{corollary}[theorem]{Corollary}
\theoremstyle{definition}
\newtheorem{definition}[theorem]{Definition}
\newtheorem{example}[theorem]{Example}
\newcommand{\squishlist}{
 \begin{list}{$\bullet$}
  { \setlength{\itemsep}{0pt}
     \setlength{\parsep}{3pt}
     \setlength{\topsep}{3pt}
     \setlength{\partopsep}{0pt}
     \setlength{\leftmargin}{1em}
     \setlength{\labelwidth}{1.5em}
     \setlength{\labelsep}{0.5em} } }
\newcommand{\squishend}{
  \end{list}  }
\newcommand{\squishlistt}{
 \begin{list}{$\bullet$}
  { \setlength{\itemsep}{0pt}
     \setlength{\parsep}{3pt}
     \setlength{\topsep}{3pt}
     \setlength{\partopsep}{0pt}
     \setlength{\leftmargin}{1em}
     \setlength{\labelwidth}{1.5em}
     \setlength{\labelsep}{0.3em} } }
\newcommand{\squishendd}{
  \end{list}  }
\newcommand{\sqi}{
 \begin{list}{$\bullet$}
  { \setlength{\itemsep}{0pt}
     \setlength{\parsep}{3pt}
     \setlength{\topsep}{3pt}
     \setlength{\partopsep}{0pt}
     \setlength{\leftmargin}{1.4em}
     \setlength{\labelwidth}{1.5em}
     \setlength{\labelsep}{0.3em} } }
\newcommand{\sqe}{
  \end{list}  }
  \newcommand{\sqqi}{
 \begin{list}{$\bullet$}
  { \setlength{\itemsep}{0pt}
     \setlength{\parsep}{3pt}
     \setlength{\topsep}{3pt}
     \setlength{\partopsep}{0pt}
     \setlength{\leftmargin}{1em}
     \setlength{\labelwidth}{1.5em}
     \setlength{\labelsep}{0.3em} } }
\newcommand{\sqqe}{
  \end{list}  }
  \newcommand{\sqii}{
 \begin{list}{}
  { \setlength{\itemsep}{0pt}
     \setlength{\parsep}{3pt}
     \setlength{\topsep}{3pt}
     \setlength{\partopsep}{3pt}
     \setlength{\leftmargin}{3.2em}
     \setlength{\labelwidth}{3.8em}
     \setlength{\labelsep}{0.6em} } }
\newcommand{\sqee}{
  \end{list}  }
\title{Ludics without Designs I: Triads}
\author{Michele Basaldella\thanks{Supported by the ANR project ANR-2010-BLAN-021301 LOGOI.}
\institute{Universit\'e d'Aix--Marseille, CNRS, I2M, Marseille, France}
\email{michele.basaldella@gmail.com}}
\begin{document}
\maketitle

\begin{abstract}
In this paper, we introduce the  concept of triad.  Using this notion, we  study,  revisit, discover and rediscover some basic properties of ludics
from a very general point of view.
\end{abstract}

\section{Introduction}

An orthodox introduction of a paper on
ludics should begin as follows.
 First, the authors  say what ludics is commonly intended to be:
typically, they would say that it is a kind of game semantics
which is close to the more popular categorical game models for linear logic and $PCF$   introduced in the last twenty years.
Having set up the context,
then they could informally describe ludics as an untyped and monistic framework which provides a  semantics for  proofs of a linear (\ie without exponentials) polarized fragment of linear logic. The authors should also stress that ludics is a  semantics which is based on interaction. Finally,
|  trying not to  frighten the  casual reader |
the authors should  give an   intuitive  account of some of the basic constituents  of ludics: the notions of design,
 orthogonality, behaviour, etc., putting
  more emphasis on the concepts
which are more related to the contribution of the paper.

Of course, there is \emph{nothing}  wrong (or bad)   in starting an article on ludics in the ``orthodox way" described above. However, for this paper we find more instructive to take
another approach. Namely, we give from the very beginning the most important definition of our work.

\begin{definition}[Triad] \label{defss}
A \textbf{triad} is an ordered triple
 $ A= (\cP_A,\cN_A,\ort_A)$ where:
\squishlist
\item $\cP_A = \lb p,q,r, \ldots \rb $ is a set. Its  elements    are said to be
\textbf{positive terms}.
\item $\cN_A = \lb n,m,\ell , \ldots \rb$ is a  set. Its elements  are said to be
\textbf{negative terms}.
\item The sets $\cP_A$ and $\cN_A$ are \emph{disjoint}.  We call the set $\cP_A \cup \cN_A$  the \textbf{domain of $A$} and we denote it as $\dom (A)$.  Elements of $\dom (A)$ are also said to be \textbf{terms}.

\item $\ort_A $ is a relation $\bot_A \subseteq \cP_A \times \cN_A$ called
\textbf{orthogonality}.
For $p \in \cP_A$ and $n \in \cN_A$, we write
$\, p \, \bot_A \, n \, $
and $\, p \, \not\!\!\bot_A \, n \, $ for $(p,n) \in \bot_A$ and
 $(p,n) \notin \bot_A$, respectively. \hfill $\triangle$
\squishend
\end{definition}

We now give a simple example of triad.

\begin{example} \label{ex triad}
Let  $I \eqdef (\cP_I,\cN_I,\ort_I)$ be the ordered triple given as follows.
\sqi
\item  Let
$P$ and $N$ be two  distinct symbols.
\item
 Let $\cP_I \eqdef\lb 0,1,2 \rb \times \lb P\rb $ and
  $\cN_I \eqdef \lb 0,1,2 \rb \times \lb N \rb$.
  Clearly, $\cP_I$ and $\cN_I$  are  disjoint sets.
The domain of $I$ is, of course, the set 
$ \lb (0,P), (1,P), (2,P),  (0,N), (1,N), (2,N)  \rb$.
 \item
 $\bot_I$ is  given as follows:

\smallskip
 {\centering
$(r,P) \, \bot_I \, (s,N)  \  \  \reldef \ \  r = s$\enspace, \qquad for  $(r,P) \in \cP_I$ and  $(s,N) \in \cN_I$\enspace.  \par}
\smallskip
  \sqe
The triple $I$ is a triad in our sense.  \hfill $\triangle$
\end{example}
The notion of triad is not at all a new mathematical concept. Except
for minor details, similar structures have  already been defined and investigated in the literature of several fields of research.
To be short, we only mention the notion of
 \emph{context}
in \emph{formal concept analysis}
(see \eg \cite{bookB, Zhang})
and the notion of  \emph{classification domain} (or \emph{classification}) in \emph{information theory} (see \eg \cite{Barwise, bookC}).

In  the field of   research
which concerns this paper | \ie the
\emph{proof theory}  related to \emph{linear logic} |  we remark that
our notion of triad   is very similar
to the concept of \emph{Boolean--valued  game} \cite{DBLP:conf/lics/LafontS91} or  \emph{Boolean--valued Chu space}  (see \eg \cite{DBLP:conf/lics/Pratt95, Zhang}), as we discuss
 in more detail in  Section \ref{example triads}.

 How  is the notion of triad  related to ludics?

As the title of this work should suggest
(``ludics without designs" is indeed quite provocative; it sounds like ``proof theory without proofs")
here we do not consider  a \emph{design} | the very central concept of ludics | as the \emph{well--specified} and concrete  proof--like
object  defined in \cite{DBLP:journals/mscs/Girard01} (see also \cite{DBLP:journals/corr/abs-cs-0501039,DBLP:journals/tcs/Terui11} for other more or less equivalent definitions of design).
Rather,  a design
 is seen as special case of  what we are calling \emph{term}:   just an
\emph{unspecified} and primitive  element of the domain  of a given triad  $A$.

Similarly, in ludics there is a  \emph{well--specified}
orthogonality relation \cite{DBLP:journals/mscs/Girard01}: the one   which relate two elements $p$ and $n$  if and only if the  procedure of normalization between the designs $p$ and $n$ successfully terminates.
By contrast, here  we consider a more general situation: we are interested in all  possible orthogonality relations. Given two disjoint sets
$\cP_A$ and $\cN_A$,  \emph{any} subset of
$\cP_A \times \cN_A$ is an orthogonality in our sense. In particular, we do not need to recall or introduce any kind of procedure of normalization.

  What can we do with triads?

\squishlist
\item In Subsection \ref{sub clo} we use the orthogonality relation $\bot_A$ to define \emph{closed sets} (in the sense of \cite{closures}) in $\cP_A$
and $\cN_A$,
and we study some  basic properties.
Closed sets   are called 
  \emph{behaviours} in ludics  \cite{DBLP:journals/mscs/Girard01},
and they are the semantical counter--part of the syntactical notion of formula in logic.

\item In Subsection \ref{spec sub} we introduce  the \emph{specialization relation}
on $\cP_A$ and $\cN_A$. Our relation of specialization
is exactly  the \emph{precedence relation} between designs   defined in
\cite{DBLP:journals/mscs/Girard01}.
Furthermore, we generalize
 specialization
to a new relation that we call \emph{semantical consequence} and study some of its properties.
\item
In Subsection \ref{ent sys sec} we
introduce the notion of \emph{entailment system}.
This notion can be seen as the natural adaptation of the concept
 \emph{information system} \cite{Scott} (see also \cite{Barwise,Zhang})
to our setting. We show that  triads equipped by
 relations of semantical consequence
are entailment systems.
This  result is useful to us because
it allows us to understand   properties of
the relation of
semantical consequence for triads in terms of  ``structural  rules" of entailment systems.
\squishend

Next, in Section \ref{functionals} we introduce and study the concept of \emph{functional} for a triad.
In ludics,
functionals are introduced in  \cite{DBLP:journals/entcs/BasaldellaST10},
and they are the designs which constitute | categorically speaking |  the \emph{morphisms} in ludics. In view of their importance, one of the aims of this paper is to study some fundamental properties of functionals at a more general and abstract level.

\squishlist
\item In Subsection \ref{fun subsec}
we generalize the notion of functional  introduced in 
 \cite{DBLP:journals/entcs/BasaldellaST10}
to our setting.

\item In Subsection \ref{sub prop fun}
we define the notion of \emph{continuous} functional
and the notion of functional
which \emph{preserves the relation
of semantical consequence}.
We show that  these two notions
are equivalent.

\item In Subsection \ref{lin sec}
we define the notion of \emph{regular} functional. This notion  is crucial in our work because in ludics every functional  (in the sense of \cite{DBLP:journals/entcs/BasaldellaST10}) is  regular.
Regularity 	has also very pleasant consequences:
 \emph{regular functionals are continuous and   preserve the relation of semantical consequence}.
\squishend

In Section  \ref{example triads} we give a couple of
examples of triads and functionals. In particular,
we show that designs and functionals
as given in  \cite{DBLP:journals/entcs/BasaldellaST10} meet our conditions.
In Section \ref{conclusions} we conclude.

Our methodology is the following: except for Section \ref{example triads}, we always
work with \emph{arbitrary triads}.
This (obviously) means that the results we are going to show
hold in any triad, and | more importantly |  that these results hold
in ludics  without explicitly introducing
the notion of  design nor the specific orthogonality relation of ludics.
The reader should be able to understand our abstract results of Section \ref{closed sets} and Section \ref{functionals}  without any previous  knowledge of ludics: all we need  is in  Definition \ref{defss}.

To conclude this section,
we would like point out that
some of the  results we show  in this paper are  perhaps not  new, as  structures similar to our triads have
been studied extensively in the literature.
On the other hand,  we also remark that |  to the best of our knowledge   |
many  constructions we are considering in this paper
and, consequently,  many  results  stated in  Section \ref{closed sets} and Section \ref{functionals}
 seem to be new, 
 \emph{when concretely applied to ludics}. 
More specifically, 
we mainly refer to: 
\sqi
\item
the construction  of the relation of semantical consequence | which allows us to understand  closed sets (\ie behaviours) in terms
 of sets of  consequences of  entailment
systems;
\item the results on functionals concerning  the notions of continuity and regularity |
which  allow us to better understand the nature of the functionals of ludics (in the sense of \cite{DBLP:journals/entcs/BasaldellaST10}).
\sqe
We collect the most significant results in Theorem \ref{thm fin}.


\section{Triads: Basic Theory}
\label{closed sets}

In this section, we study some basic property of triads.
In Subsection
\ref{sub clo} we introduce the notion of closed set in our setting.
In Subsection \ref{spec sub}
 we introduce  the relations of specialization and semantical consequence, respectively, and study some properties. Finally, in Subsection
\ref{ent sys sec} we
introduce the notion of entailment system and relate this concept
with the relation of semantical consequence.


 We now
 fix some notation and terminology.
Let $ A= (\cP_A,\cN_A,\ort_A)$
be a triad.
\sqi
\item[(1)]   We use $a,b,c,\ldots$ to range over terms (\ie over elements of the domain of $A$).

\item[(2)]  We use the letter $\cO_A$
  as a variable ranging over $\lb \cP_A, \cN_A \rb$.
Furthermore,  if $\cO_A = \cP_A$, then we write
$\overline{\cO_A}$ for $\cN_A$. Similarly, if
 $\cO_A = \cN_A$, then we write
$\overline{\cO_A}$ for $\cP_A$.
Note that $\overline{\overline{\cO_A}} = \cO_A$.

\item[(3)]  By an abuse of notation,
for $a \in \cO_A$ and $b \in \overline{\cO_A}$ we write   $a \, \ort_A \,b\, $,   or equivalently $b \, \ort_A \, a \,$,  for

\smallskip

{\centering
the positive term in $ \lb a,b \rb$  \ \  $\bot_A$ \ \  the negative term in $\lb a,b \rb$\enspace.
\par}

\smallskip

\noindent This notation makes sense precisely because the sets $\cO_A$ and $\overline{\cO_A}$ are supposed to be disjoint.
\item[(4)]  We write $\emptyset_{\cO_A}$ to mean that  the empty--set
$\emptyset$ has to be intended as a subset of $\cO_A$.
\item[(5)]  Given a set $E$ we write
$\pow(E)$ for the power--set of $E$ (\ie the set of its subsets).
\item[(6)]  We use the expression ``iff'' as an abbreviation for ``if and only if.'' 
\sqe

From now on, up to the end the paper, we fix an arbitrary triad
$ A= (\cP_A,\cN_A,\ort_A)$
and an arbitrary subset of terms
$\cO_A \in \lb \cP_A , \cN_A \rb$. To ease notation, in the rest of the paper
 we  write  $\cP$, $\cN$, $\bot$ and $\cO$  for
 $\cP_A$, $\cN_A$, $\bot_A$ and  $\cO_A$, respectively.

\subsection{Closed Sets} \label{sub clo}

In this subsection, we  use the orthogonality relation $\bot$ to equip the sets $\cP$ and $\cN$
with some topological structure. Namely,   we introduce the concept of \emph{closed set} in our setting.
Here, closed sets are not to be intended as ``closed sets in a \emph{topological space}" but as
``closed sets in a \emph{closure space} \cite{closures}", a slightly more general topological notion.
Closed sets are important to us because they correspond to  \emph{behaviours}  \cite{DBLP:journals/mscs/Girard01,DBLP:journals/corr/abs-cs-0501039,DBLP:journals/tcs/Terui11},
the ``semantical" notion of formula in ludics.
Closed sets induced by orthogonality are not uncommon in the literature of theoretical computer science, see \eg
\cite{DBLP:conf/popl/VouillonM04,DBLP:journals/tcs/Paolini08}.

\begin{definition}[Orthogonal sets, closed sets] \label{orto} Let $X \subseteq \cO$.
 We define  the \textbf{orthogonal set of $X$}
as the subset $X^\bot $ of  $ \overline{\cO}$ given by:
\smallskip

{\centering
$ b \in X\b \ \reldef \  \orth{a}{b}$ \ for every $ a \in X$\enspace, \qquad for \,\!\! $b \in\overline{\cO}$\enspace.
\par}

\smallskip

\noindent
We call \textbf{closed set in $\cO$} any set $X \subseteq \cO$    such that $X = X\b\b$.
  \hfill $\triangle$
\end{definition}

The following theorem  justifies the terminology given in Definition \ref{orto}.

\begin{theorem}[Closure] \label{closed}  Let $X$ and $Y$ be subsets of  $\cO$, and let $
a \in \cO$ and $b \in \overline{\cO}$.  Then, we have:

\smallskip

{\centering
$
\begin{array}{rlcrl}
\!\!\hbox{\emph{(1)}} \!\!&\!\!  X \subseteq X \b\b \!\ ;
 & \phantom{as} &
 \!\!\!\!\!\!\!\! \!\!\hbox{\emph{(2)}} \!\!&\!\! X \subseteq Y  \hbox { \!\ implies \!\ }  Y\b \subseteq X \b  \!\ ;  \\ \!\!\hbox{\emph{(3)}} \!\!&\!\! X \subseteq Y  \hbox { \!\ implies \!\ }  X\b\b \subseteq Y\b\b \!\ ;  & &
  \!\!\!\!\!\!\!\!\!\!\hbox{\emph{(4)}} \!\!&\!\! X\b =  X \b\b\b \!\ ; \\
 \!\!\hbox{\emph{(5)}} \!\!&\!\!  X\b\b = X\b\b\b\b  \!\ ; & \phantom{a} &
  \!\!\!\!\!\!\!\!\!\!\hbox{\emph{(6)}} \!\!&\!\!   \mbox{$X\b$ is a closed set in $\overline{\cO}$} \!\ ;  \\

 \!\!\hbox{\emph{(7)}} \!\!&\!\!   \mbox{$X\b\b$ is a closed set in $\cO$} \!\ ; & \phantom{as} &
    \!\!\!\!\!\!\!\!\!\!\hbox{\emph{(8)}} \!\!&\!\!  \mbox{$X$ is a  closed set in $\cO$ just in case $X= Z\b\b$ for some $Z \in \cO$} \!\ ; \\

     \!\!\hbox{\emph{(9)}} \!\!&\!\!   \mbox{$\orth{a}{b}$ just in case  $a \in \lb b \rb \b$} \!\ ; & \phantom{as} &
      \!\!\!\!\!\!\!\!\!\!\!\!\!\!\hbox{\emph{(10)}} \!\!&\!\!  \mbox{$\orth{a}{b}$ just in case $b \in \lb a \rb \b$} \!\ . \\

  \end{array}
$

\par}

\vspace{-0.2cm}

\end{theorem}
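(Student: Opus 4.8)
The plan is to observe that the map $X \mapsto X\b$ from $\pow(\cO)$ to $\pow(\overline{\cO})$, together with the analogously defined map $Y \mapsto Y\b$ from $\pow(\overline{\cO})$ to $\pow(\cO)$, forms an antitone Galois connection, so that the ten items fall into two groups. Items (1) and (2) are proved by directly unfolding Definition \ref{orto}; items (3)--(8) are then obtained from (1) and (2) by pure diagram chasing; and items (9) and (10) are again immediate from the definition, this time applied to singletons. Throughout I would rely on the ``abuse of notation'' convention stated just before Definition \ref{orto}, by which $\orth{a}{b}$ is read symmetrically, so that one never needs to keep separate track of whether a given set sits inside $\cP$ or inside $\cN$; the only bookkeeping required is that each application of $X \mapsto X\b$ toggles between $\cO$ and $\overline{\cO}$, which is harmless since $\overline{\overline{\cO}} = \cO$.

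First I would prove (1): given $a \in X$, to see that $a \in X\b\b$ I must check that $\orth{a}{b}$ holds for every $b \in X\b$; but by definition $b \in X\b$ says exactly that $\orth{a'}{b}$ for all $a' \in X$, in particular for $a' = a$. Next I would prove (2): if $X \subseteq Y$ and $b \in Y\b$, then $\orth{a}{b}$ for all $a \in Y$, hence \emph{a fortiori} for all $a \in X$, so $b \in X\b$. These two facts carry all the content.

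The remaining items follow formally. Item (3) is (2) applied twice to $X \subseteq Y$. For (4), applying (1) to the set $X\b \subseteq \overline{\cO}$ gives $X\b \subseteq X\b\b\b$, while applying (2) to the inclusion $X \subseteq X\b\b$ of (1) gives $X\b\b\b \subseteq X\b$; together these yield the equality. Item (5) is (4) read with $X\b$ in place of $X$. Items (6) and (7) are merely restatements of (4) and (5) once one unfolds ``$W$ is a closed set'' as $W = W\b\b$: for $W = X\b$ this reads $X\b = X\b\b\b$, which is (4); for $W = X\b\b$ it reads $X\b\b = X\b\b\b\b$, which is (5). For (8): if $X$ is closed then $X = X\b\b$, so one may take $Z := X$; conversely, if $X = Z\b\b$ for some $Z \subseteq \cO$ (the evident intended reading of the displayed clause), then $X$ is closed by (7). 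Finally, (9) and (10) come straight from Definition \ref{orto} applied to the singletons $\lb b \rb \subseteq \overline{\cO}$ and $\lb a \rb \subseteq \cO$: indeed $a \in \lb b \rb\b$ holds iff $\orth{b'}{a}$ for every $b' \in \lb b \rb$, that is, iff $\orth{a}{b}$, and symmetrically for (10).

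There is no genuine obstacle here. The only points needing a little care are the discipline of not confusing the two ambient sets $\cO$ and $\overline{\cO}$ when chaining applications of $X \mapsto X\b$, and the observation that in (8) the ``$Z \in \cO$'' of the statement is to be read as ``$Z \subseteq \cO$''. One could alternatively settle the whole theorem by invoking the standard theory of antitone Galois connections and closure operators, but the direct arguments above are short enough to keep this section self-contained, as promised in the introduction.
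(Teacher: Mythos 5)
Your proof is correct and follows essentially the same route as the paper: items (1) and (2) are unfolded directly from Definition \ref{orto}, and (3)--(8) are derived formally from them exactly as in the paper's proof (the only cosmetic difference being that in (8) you take $Z := X$ where the paper takes $Z := X\b\b$, both of which work). Your reading of ``$Z \in \cO$'' as ``$Z \subseteq \cO$'' matches what the paper itself does in its proof of (8).
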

\begin{proof}
(1) : \   Let $c \in X$.  We  have  $\orth{c}{d}$ for every $d \in X\b$  by definition of $X\b$.  Hence, $c \in X\b\b$.

(2) : \  Let $c \in Y\b$. Then,  $\orth{c}{d}$ for every $d \in Y$.
As $X \subseteq Y$, we have $\orth{c}{d}$ for every $d \in X$.
So, $ c \in X\b$.

(3) : \   By  applying (2) above  two times, we get the result.

(4) : \  By (1) above,  we have $X\b \subseteq (X\b)\b\b = X\b\b\b$. By (1) again, we obtain
$X \subseteq X\b\b$. By (2) above, we get $X\b\b\b =(X\b\b)\b  \subseteq X\b$. Hence, we have $X\b = X\b\b\b$.

(5) : \    By (4) above, we have $X\b = X\b\b\b$. Hence,  $X\b\b = (X\b)\b = (X\b\b\b)\b = X\b\b\b\b$.

(6) and (7) : \  By (4) and (5) above, we have $X\b = (X\b)\b\b$ and
 $X\b\b = (X\b\b)\b\b$, respectively.

(8) : \  Suppose that $X =Z\b\b$ for some $Z \subseteq \cO$. Then, we have
$X\b\b =  Z\b\b\b\b = Z\b\b =X$ by using (5) above. As for the converse,
suppose that $X$ is a  closed in $\cO$, \ie that $X = X\b\b$ holds.
Take $Z \eqdef X\b\b$. Then,  we have
$Z\b\b = X\b\b\b\b = X\b\b = X$ by using (5) above again.

(9)  and
(10) : \   They immediately follow from the definition of orthogonal set.
\end{proof}

We note that properties (1), (3) and (5) above say that $\b\b$ is a closure operator on the sets $\cP$ and $\cN$, \ie that $(\cP,\b\b)$
and $(\cN,\b\b)$ form two \emph{closure spaces}, in the sense of \cite{closures}.  However, we  want to point out that our theory is richer than the theory of closure spaces:  the latter can be ``axiomatized" by using
properties (1), (3) and (5) above and the  double--orthogonality operator $\b\b$ only; in particular,
 there is nothing there which  corresponds to property (2) above, for instance.

In the sequel, we frequently use the properties
listed in  Theorem \ref{closed} without  any explicit reference.



\begin{example} \label{ex triad cont} Let $I$ be the triad given in Example \ref{ex triad}.
We  calculate the closed sets
in $\cP_I$ and $\cN_I$.
\sqi
\item $(\emptyset_{\cP_I})^{\bot_I \bot_I}= (\cN_I)^{\bot_I}= \emptyset_{\cP_I}$.
\item  $\lb (r,P) \rb^{\bot_I \bot_I} = \lb (r,N) \rb^{\bot_I}= \lb (r,P) \rb$, for every $r \in \lb 0,1,2 \rb$.

\item   $X^{\bot_I \bot_I} = (\emptyset_{\cN_I})^{\bot_I}= \cP_I$, for any other subset $X$ of $\cP_I$.
\item $ (\emptyset_{\cN_I})^{\bot_I \bot_I} = (\cP_I)^{\bot_I} = \emptyset_{\cN_I}$.
\item   $\lb (s,N) \rb^{\bot_I \bot_I} = \lb (s,P) \rb ^{\bot_I}= \lb (s,N) \rb$, for every $s \in \lb 0,1,2 \rb$.

\item   $X^{\bot_I \bot_I} = (\emptyset_{\cP_I})^{\bot_I}= \cN_I$, for any other subset $X$ of $\cN_I$. \hfill $\triangle$
\sqe
\end{example}

\subsection{The Relation of Specialization  and the Relation of Semantical Consequence} \label{spec sub} \label{sen sec}

We now introduce the \emph{specialization relation} in our setting. Specialization is a   relation which  has been extensively studied in topology, order theory and domain theory.
In our setting, we define the specialization relation as follows.

\begin{definition}[The specialization relation $\spo$]  \label{specrel}
We define the  \textbf{specialization relation} as the binary relation  $\spo \, \subseteq \cO \times \cO$ given by:
\vspace{-0.3cm}
\begin{equation}\tag*{$\triangle$}
\mbox{$ a \spo b \ \reldef \  \lb a \rb\b \subseteq \lb b \rb\b\enspace, \qquad \hbox{for \,\!\! $a$  \,\!\! and  \,\!\! $b$\, \!\!  in \,\!\!  $\cO$}\enspace.$ }
\end{equation}
\end{definition}

In the sequel, we  read the expression $a \spo b$ as ``$a$ is more special than $b$.''

Our definition of $\spo$  follows  the analogous relation defined in \cite{DBLP:journals/mscs/Girard01}.
There, the specialization relation is called \emph{precedence} relation, and in \cite{DBLP:journals/corr/abs-cs-0501039,DBLP:journals/tcs/Terui11}
it is called \emph{observational} ordering (but note, however, that there the observational ordering  is defined in a different manner). In ludics, we can read $a \spo b$  as ``$a$ is more defined than $b$" (see \cite{DBLP:journals/mscs/Girard01}).

We also  point out that several authors define
the specialization relation as the
inverse  of  $\spo$. For instance, in \cite{closures}  specialization for closure spaces is defined as in Proposition \ref{proppsec}(iii) below, but with the role of $a$ and $b$ interchanged.




\begin{example} \label{ex triad contI} Let $I$ be the triad given in Example \ref{ex triad}. We  have:
\sqi
\item 
$ (r,P) \, \lhd_{\cP_I} \, (r',P)\,$ \ifff $r=r'$, for every $r$ and $r'$ in $\lb 0,1,2\rb$.

\item Analogously,
$ (s,N) \, \lhd_{\cN_I} \, (s',N)\,$ \ifff $s=s'$, for every $s$ and $s'$ in $\lb 0,1,2\rb$. \hfill $\triangle$
\sqe
\end{example}

The following proposition gives us a useful characterization of the relation of specialization.

\begin{proposition} \label{proppsec}
Let $a$ and $b$ in $\cO$. Then, the  following  claims are equivalent:

\smallskip

{\centering

$
\begin{array}{rlcrlcrl}
\hbox{\emph{(i)}} &  a \spo b \!\ ;
 & \phantom{as} &
\hbox{\emph{(ii)}}  & b \in \lb a\rb\b\b \!\ ; &  \phantom{as} &
\hbox{\emph{(iii)}}  & \mbox{For every $X \subseteq \cO$, \ $a \in X\b\b$ \ implies \  $b \in X\b\b$\enspace.}  \end{array}
$

\par}

\vspace{-0.2cm}
\end{proposition}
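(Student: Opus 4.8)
The proposition asserts the equivalence of three characterizations of $a \spo b$. The natural strategy is a cyclic implication $(i) \Rightarrow (ii) \Rightarrow (iii) \Rightarrow (i)$, using only Definition~\ref{orto}, Definition~\ref{specrel}, and the elementary properties collected in Theorem~\ref{closed}.

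First, for $(i) \Rightarrow (ii)$: assume $\lb a \rb\b \subseteq \lb b \rb\b$. Applying the antitonicity of orthogonality (Theorem~\ref{closed}(2)) to this inclusion gives $\lb b \rb\b\b \subseteq \lb a \rb\b\b$. By Theorem~\ref{closed}(1) we have $b \in \lb b \rb\b\b$, hence $b \in \lb a \rb\b\b$, which is $(ii)$.

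Next, for $(ii) \Rightarrow (iii)$: let $X \subseteq \cO$ with $a \in X\b\b$. Since $X\b\b$ is a closed set in $\cO$ (Theorem~\ref{closed}(7)), we have $\lb a \rb \subseteq X\b\b$, and applying double orthogonality monotonically (Theorem~\ref{closed}(3)) together with the fact that $X\b\b$ is closed gives $\lb a \rb\b\b \subseteq (X\b\b)\b\b = X\b\b$. Since $b \in \lb a \rb\b\b$ by hypothesis $(ii)$, we conclude $b \in X\b\b$.

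Finally, for $(iii) \Rightarrow (i)$: instantiate the universally quantified $X$ in $(iii)$ with $X \eqdef \lb a \rb$. Since $a \in \lb a \rb \subseteq \lb a \rb\b\b$, the hypothesis yields $b \in \lb a \rb\b\b$. Now apply orthogonality to this: from $\lb b \rb \subseteq \lb a \rb\b\b$ and Theorem~\ref{closed}(2) we get $(\lb a \rb\b\b)\b \subseteq \lb b \rb\b$, i.e. $\lb a \rb\b\b\b \subseteq \lb b \rb\b$, and by Theorem~\ref{closed}(4) the left-hand side equals $\lb a \rb\b$. Hence $\lb a \rb\b \subseteq \lb b \rb\b$, which is exactly $a \spo b$ by Definition~\ref{specrel}. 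I do not anticipate a genuine obstacle here — the whole argument is a routine unwinding of the closure-operator identities from Theorem~\ref{closed}; the only point requiring a moment's care is making sure each inclusion is pushed through orthogonality in the correct (order-reversing versus order-preserving) direction.
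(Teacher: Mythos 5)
Your proof is correct and uses essentially the same ingredients as the paper's (antitonicity, $X\b = X\b\b\b$, and $X \subseteq X\b\b$ from Theorem~\ref{closed}), merely traversing the cycle in the opposite order: you prove (i)$\Rightarrow$(ii)$\Rightarrow$(iii)$\Rightarrow$(i), whereas the paper proves (i)$\Rightarrow$(iii)$\Rightarrow$(ii)$\Rightarrow$(i). Every step checks out, so this is just a cosmetic reorganization of the same argument.
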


\begin{proof}
 (i) implies (iii) : \  Let $X \subseteq \cO$. Suppose that $a \in X\b\b$, \ie $\lb a\rb \subseteq X\b\b$.
Then, we have $ X\b =X\b\b\b \subseteq \lb a \rb\b$. Suppose that
 $a \spo b$, \ie
$\lb a \rb\b \subseteq \lb b \rb\b$. Then,  we have
$X\b  \subseteq \lb b \rb\b$.
Thus,
 $ \lb b \rb\b\b \subseteq   X\b\b$.
Therefore, $ b \in \lb b \rb \subseteq \lb b \rb\b\b \subseteq X\b\b$.

 (iii) implies (ii) : \ Let $X = \lb a \rb$. We have
 $ a \in  \lb a \rb  \subseteq \lb a \rb\b\b$. So,
$ b \in  \lb a \rb\b\b$  by (iii).

 (ii) implies (i) : \
Assume $ b \in \lb a \rb\b\b$, \ie  $ \lb b \rb  \subseteq \lb a \rb\b\b$.
Then, we have
$ \lb 	a \rb\b = \lb a \rb\b\b\b  \subseteq \lb b \rb\b$.  Hence, $a \spo b$.
\end{proof}

\begin{corollary} \label{coro sp}
 For every $a$ and $b$ in $\cO$, we have  \,\!\!
$ b \in \lb a \rb\b\b $ \ if and only if \ $ a \spo b$ \!\,.
\end{corollary}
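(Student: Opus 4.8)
The plan is to derive Corollary~\ref{coro sp} directly as a restatement of the equivalence between claims (i) and (ii) in Proposition~\ref{proppsec}. The corollary asserts that for every $a$ and $b$ in $\cO$, we have $b \in \lb a \rb\b\b$ if and only if $a \spo b$. But claim (ii) of Proposition~\ref{proppsec} is exactly ``$b \in \lb a\rb\b\b$'' and claim (i) is exactly ``$a \spo b$,'' so the corollary is literally the equivalence ``(ii) $\Leftrightarrow$ (i)'' which is part of the content of Proposition~\ref{proppsec}.

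Concretely, I would write: by Proposition~\ref{proppsec}, claims (i) and (ii) are equivalent; that is, $a \spo b$ if and only if $b \in \lb a\rb\b\b$, which is the assertion of the corollary. If one wants to be fully self-contained, one can also just replay the two short implications from the proof of the proposition: for ``(ii) implies (i)'' assume $b \in \lb a\rb\b\b$, i.e.\ $\lb b\rb \subseteq \lb a\rb\b\b$, apply $\b$ and use Theorem~\ref{closed}(4) to get $\lb a\rb\b = \lb a\rb\b\b\b \subseteq \lb b\rb\b$, hence $a \spo b$; for the other direction, from $a \spo b$, i.e.\ $\lb a\rb\b \subseteq \lb b\rb\b$, apply $\b$ to obtain $\lb b\rb\b\b \subseteq \lb a\rb\b\b$, and since $b \in \lb b\rb \subseteq \lb b\rb\b\b$ by Theorem~\ref{closed}(1), conclude $b \in \lb a\rb\b\b$.

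There is essentially no obstacle here: the corollary is a trivial consequence of the proposition, and it is presumably stated separately only because the particular phrasing ``$b \in \lb a\rb\b\b$ iff $a \spo b$'' is convenient to cite later in the paper. The only thing to be careful about is orientation of the orthogonal-set operator (noting that $\b$ sends subsets of $\cO$ to subsets of $\overline{\cO}$ and vice versa, so $\lb a\rb\b \subseteq \overline{\cO}$ while $\lb a\rb\b\b \subseteq \cO$), but this is handled automatically once we invoke Proposition~\ref{proppsec} as a black box.

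Accordingly, the proof I would submit is the one-line deduction:

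\begin{proof}
Immediate from the equivalence of claims (i) and (ii) in Proposition~\ref{proppsec}.
\end{proof}
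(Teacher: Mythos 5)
Your proposal is correct and coincides exactly with the paper's own proof, which likewise reads the corollary off as the equivalence of claims (i) and (ii) of Proposition~\ref{proppsec}. The optional self-contained replay you sketch is also accurate but unnecessary.
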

\begin{proof}
It immediately follows from the equivalence of properties
(i) and (ii) of Proposition \ref{proppsec}.
\end{proof}

By   Corollary
 \ref{coro sp}, for any singleton subset
$\lb a \rb$ of $\cO$,
the  closed set
$\lb a\rb \b\b$ can be completely
described by using specialization: the members
of  $\lb a\rb \b\b$
are exactly the terms $b \in \cO$
such that $a \spo b$ holds.
Our next step
is to generalize this kind of property  to arbitrary sets,
\ie not only singletons.
To do this, we need to generalize the relation of specialization. This motivates the following definition.

\begin{definition}[The semantical consequence relation $\sco$] \label{sem cons}
We define the  relation of \textbf{semantical consequence} as the binary relation  $\sco \ \subseteq \pow(\cO) \times \cO$ given by:

\vspace{-0.3cm}
\begin{equation}\tag*{$\triangle$}
\mbox{$ X \sco b \ \ \reldef \ \   \bigcap_{a \in X}\lb a \rb\b \subseteq \lb b \rb\b\enspace, \qquad \hbox{for  \,\!\! $X \subseteq \cO$  \,\!\! and   \,\!\! $b \in \cO$}\enspace.$ }
\end{equation}
\end{definition}

In the sequel, we  read the expression $X \sco b$ as ``$b$ is a semantical consequence of $X$.''

 Regarding our terminology, we call the relation
$\sco$ \emph{semantical consequence} because so it is called in similar contexts (\eg in \cite{Barwise}).
Indeed,
if we consider the elements
of $\cO$ as \emph{sentences} (in a  language for first--order logic),
and for each $c \in  \cO$ the set $\lb c\rb\b$ as the class
of \emph{structures} (\ie models) in which
$c$ is true, then
$X \sco b$ states that
the class of structures in which
all $a$ in $X$ are true is a subclass
of the class of structures in which $b$
is true. In this sense,
 the definition of  $X \sco b$
 is  very similar in spirit  to the standard definition of the relation of  semantical consequence
in logic.

We now observe that the relation $\sco$ is indeed
a  generalization of  $\spo$.
\begin{proposition} \label{gen} For every $a$ and $b$ in $\cO$, we have \,\!\! $a  \spo b$ \  if and only if \
 $\lb a \rb \sco b$  \!\,.
\end{proposition}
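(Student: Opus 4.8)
The plan is to simply unfold both definitions and observe that the defining inequality for $\lb a \rb \sco b$ collapses to that of $a \spo b$ because the indexing set is a singleton. Concretely, by Definition~\ref{sem cons}, the statement $\lb a \rb \sco b$ holds if and only if $\bigcap_{c \in \lb a \rb} \lb c \rb\b \subseteq \lb b \rb\b$. Since $\lb a \rb$ has $a$ as its only member, the intersection on the left-hand side is just $\lb a \rb\b$ (an intersection of a family indexed by a one-element set equals the unique member of that family). Hence the condition reads $\lb a \rb\b \subseteq \lb b \rb\b$, which is verbatim the defining condition of $a \spo b$ in Definition~\ref{specrel}. Both implications of the biconditional are therefore obtained at once from this single chain of equivalences.

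I do not expect any genuine obstacle: the only point that deserves to be made explicit is the elementary set-theoretic observation $\bigcap_{c \in \{a\}} S_c = S_a$, after which the two definitions are literally the same relation restricted along $a \mapsto \{a\}$. For completeness one might add a sentence noting that, combined with Corollary~\ref{coro sp}, this gives $b \in \lb a \rb\b\b$ iff $\lb a \rb \sco b$, which is the exact singleton instance of the ``general'' property of $\sco$ advertised just before Definition~\ref{sem cons}; but that remark is a corollary of the proof rather than part of it.
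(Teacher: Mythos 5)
Your proof is correct and is essentially identical to the paper's: both simply unfold Definition~\ref{sem cons}, observe that $\bigcap_{c \in \{a\}} \lb c \rb\b = \lb a \rb\b$, and conclude that the condition coincides verbatim with that of Definition~\ref{specrel}.
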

\begin{proof}
Let  $a$ and $b$ in  $\cO$.
Since $\bigcap_{c \in \{ a \}}\lb c \rb\b = \lb a \rb\b$,
we have

\vspace{-0.35cm}
\begin{equation}\tag*{$\qedhere \square$}
\mbox{$
a \spo b  \quad $iff$ \quad  \lb a \rb\b\subseteq \lb b \rb\b
  \quad $iff$ \quad \bigcap_{c \in \{ a \}}\lb c \rb\b \subseteq \lb b \rb\b     \quad $iff$ \quad  \lb a \rb \sco b \enspace.$}
\end{equation}
\end{proof}

\begin{example}  Let $I$ be the triad given in Example \ref{ex triad}.
\sqi
\item  For every  $r \in \lb 0,1,2\rb$
and every $X \subseteq \cP_I$,
we have
 $ X  \, \LHD_{\cP_I} \, (r,P)\,$ \ifff either  $X = \lb (r,P) \rb$ or
$X$ contains at least two elements.

\item  Similarly,
for every  $s \in \lb 0,1,2\rb$
and every $X \subseteq \cN_I$,
we have
 $ X  \, \LHD_{\cN_I} \, (s,N)\,$ if and only if either  $X = \lb (s,N) \rb$ or
$X$ contains at least two elements. \hfill $\triangle$
\sqe
\end{example}

\begin{lemma}\label{int prop}
For every  $X \subseteq \cO$,
we have
 $ X \b= \bigcap_{a \in X} \lb a \rb\b$. \end{lemma}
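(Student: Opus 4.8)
The plan is to prove the set equality by simply unwinding the definition of the orthogonal set, reducing it to the trivial observation that a membership condition quantified over $X$ is the same as an intersection of the conditions for each singleton $\lb a\rb$. Both $X\b$ and $\bigcap_{a \in X}\lb a \rb\b$ are subsets of $\overline{\cO}$, so it suffices to fix an arbitrary $b \in \overline{\cO}$ and check that $b \in X\b$ holds iff $b \in \bigcap_{a \in X}\lb a \rb\b$ holds. By Definition \ref{orto}, $b \in X\b$ means precisely that $\orth{a}{b}$ for every $a \in X$. For each fixed $a \in X$, Definition \ref{orto} also gives that $b \in \lb a \rb\b$ iff $\orth{a'}{b}$ for every $a' \in \lb a\rb$, which — since $\lb a\rb$ is a singleton — is just $\orth{a}{b}$. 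Hence $b \in \bigcap_{a \in X}\lb a \rb\b$ iff $\orth{a}{b}$ for every $a \in X$, and the two conditions are literally the same; since $b$ was arbitrary, the two sets coincide.

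The only point that deserves a word of care is the degenerate case $X = \emptyset_{\cO}$. There $X\b = \overline{\cO}$ by the vacuous quantification in Definition \ref{orto}, while the empty intersection $\bigcap_{a \in \emptyset_{\cO}}\lb a \rb\b$ must be understood, as usual, as the ambient set $\overline{\cO}$ over which the index $a$ ranges; with this standard convention the stated identity holds in this case too.

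I do not anticipate any real obstacle here: the content of the lemma is essentially that the defining clause of $X\b$ — a conjunction of the atomic conditions $\orth{a}{b}$ indexed by $a \in X$ — is the same thing as membership in the intersection of the singleton orthogonals. The statement is used so pervasively in the sequel that it is almost a reformulation of the definition; the proof is a short chain of ``iff''s as sketched above.
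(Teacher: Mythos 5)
Your proof is correct and matches the paper's own argument: fix $b \in \overline{\cO}$ and chain the ``iff''s through the definition of the orthogonal set. The extra remark about the empty-intersection convention for $X = \emptyset_{\cO}$ is a reasonable point of care the paper leaves implicit, but it does not change the approach.
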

\begin{proof}
Let   $X\subseteq \cO$, and let $b \in \overline{\cO}$. We have:

\vspace{-0.75cm}
\begin{equation}\tag*{$\qedhere \square$}
\mbox{$
b \in  \bigcap_{a \in X} \lb a \rb\b  \quad $iff$ \quad  b \in \lb a\rb\b$ \ for every $a \in X
  \quad $iff$ \quad \orth{a}{b}$ \ for every $a \in X    \quad $iff$ \quad  b \in X\b \enspace.$}
\end{equation}
\end{proof}

We  now characterize  the orthogonality relation $\bot$ in terms of the relations of semantical consequence.

\begin{proposition} \label{orth prop} Let  $a \in \cO$ and $b \in  \overline{\cO}$. Then, the following claims are equivalent:

\smallskip

{\centering

$
\begin{array}{rlcrlcrl}
\hbox{\emph{(1)}} &  \orth{a}{b} \!\ ;
 & \phantom{as} &
\hbox{\emph{(2)}}  & \lb b \rb\b \sco a \!\ ; &  \phantom{as} &
\hbox{\emph{(3)}}  & \lb a \rb\b \scd b\enspace.  \end{array}
$

\par}

\vspace{-0.2cm}
\begin{proof} (1) implies (2) : \ Assume $\orth a b$.
Then, we have $a \in \lb b\rb\b$, \ie
$\lb a \rb  \subseteq \lb b\rb\b$.
So, $\lb b \rb\b\b  \subseteq \lb a\rb\b$.
As
$\lb b \rb\b\b = (\lb b \rb\b)\b$, we conclude
$\lb b \rb\b \sco a $ by Lemma  \ref{int prop}.

 (2) implies (3) : \ Assume $\lb b \rb\b \sco a $. By   Lemma \ref{int prop}, we obtain
 $  \lb b \rb\b\b =  (\lb b \rb\b)\b \subseteq \lb a \rb\b$. Thus,  $ (\lb a \rb\b)\b =  \lb a \rb \b\b \subseteq \lb b \rb\b \b \b = \lb b \rb \b$. By Lemma again, \ref{int prop}, we finally get $\lb a \rb\b \scd b $.

  (3) implies (1) : \ Assume $\lb a \rb\b \scd b $. By Lemma \ref{int prop}, we get   $  \lb a \rb\b\b =  (\lb a \rb\b)\b \subseteq \lb b \rb\b$.
 Hence, $a \in \lb a\rb \subseteq\lb a \rb\b \b \subseteq \lb b \rb \b $.  Thus, we conclude
 $\orth a b $.
 \end{proof}

\end{proposition}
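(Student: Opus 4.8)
The plan is to reduce each of (2) and (3), via Lemma~\ref{int prop}, to a single inclusion between double-orthogonal sets, and then to recognise that inclusion as equivalent to the one-line inclusion expressing orthogonality; chaining the equivalences yields $(1)\Leftrightarrow(2)\Leftrightarrow(3)$. Two preliminary remarks carry all the bookkeeping. First, by Theorem~\ref{closed}(9) and~(10), $\orth{a}{b}$ holds iff $a\in\lb b\rb\b$, equivalently iff $\lb a\rb\subseteq\lb b\rb\b$, and also iff $b\in\lb a\rb\b$, equivalently iff $\lb b\rb\subseteq\lb a\rb\b$; so orthogonality is symmetric in the relevant sense and is witnessed by the inclusion of a singleton into an orthogonal set. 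Second, by Lemma~\ref{int prop} applied to $X=\lb b\rb\b$ (which is a subset of $\cO$ because $b\in\overline{\cO}$), and in its mirror form to $X=\lb a\rb\b\subseteq\overline{\cO}$, the inner intersection $\bigcap_{c\in X}\lb c\rb\b$ occurring in the definitions of $\sco$ and $\scd$ equals $X\b$; consequently (2) unfolds to $\lb b\rb\b\b\subseteq\lb a\rb\b$ and (3) unfolds to $\lb a\rb\b\b\subseteq\lb b\rb\b$. (The only thing to check here is that the objects are correctly typed: $\lb b\rb\b\subseteq\cO$ with $a\in\cO$ makes $\lb b\rb\b\sco a$ meaningful, and $\lb a\rb\b\subseteq\overline{\cO}$ with $b\in\overline{\cO}$ makes $\lb a\rb\b\scd b$ meaningful.)

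It then remains to check $\orth{a}{b}\Leftrightarrow\lb b\rb\b\b\subseteq\lb a\rb\b$ and $\orth{a}{b}\Leftrightarrow\lb a\rb\b\b\subseteq\lb b\rb\b$, each by a two-step closure argument. For the first: if $\orth{a}{b}$ then $\lb a\rb\subseteq\lb b\rb\b$, and applying the antitone operator $\b$ (Theorem~\ref{closed}(2)) gives $\lb b\rb\b\b\subseteq\lb a\rb\b$; conversely, from $\lb b\rb\b\b\subseteq\lb a\rb\b$ together with $\lb b\rb\subseteq\lb b\rb\b\b$ (Theorem~\ref{closed}(1)) we obtain $\lb b\rb\subseteq\lb a\rb\b$, i.e.\ $\orth{a}{b}$. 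The second equivalence is the mirror image, now starting from the characterisation $\lb b\rb\subseteq\lb a\rb\b$ of orthogonality and using $\lb a\rb\subseteq\lb a\rb\b\b$ for the converse direction. Putting the pieces together yields $(1)\Leftrightarrow(2)$ and $(1)\Leftrightarrow(3)$, hence the three-way equivalence; if one prefers the cyclic presentation $(1)\Rightarrow(2)\Rightarrow(3)\Rightarrow(1)$, the step $(2)\Rightarrow(3)$ simply passes from $\lb b\rb\b\b\subseteq\lb a\rb\b$ to $\lb a\rb\b\b\subseteq\lb b\rb\b$ by applying $\b$ and invoking $X\b=X\b\b\b$ (Theorem~\ref{closed}(4)).

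I do not anticipate any genuine obstacle: once Lemma~\ref{int prop} has been used, the statement is a routine diagram chase with the closure properties collected in Theorem~\ref{closed}. The one point that needs attention — and the only place a careless argument could go wrong — is keeping track of the ambient set: $\sco$ is a relation on $\pow(\cO)\times\cO$ whereas $\scd$ is a relation on $\pow(\overline{\cO})\times\overline{\cO}$, and every application of $\b$ exchanges $\cO$ and $\overline{\cO}$, so at each step one must verify that the set being intersected, orthogonalised, or compared lies in the set the relation in play expects.
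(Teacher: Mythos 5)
Your proposal is correct and follows essentially the same route as the paper: unfold (2) and (3) via Lemma~\ref{int prop} into the inclusions $\lb b\rb\b\b\subseteq\lb a\rb\b$ and $\lb a\rb\b\b\subseteq\lb b\rb\b$, then relate these to $\orth{a}{b}$ using antitonicity of $\b$ and $X\subseteq X\b\b$. The only difference is organisational (you prove $(1)\Leftrightarrow(2)$ and $(1)\Leftrightarrow(3)$ rather than the paper's cycle $(1)\Rightarrow(2)\Rightarrow(3)\Rightarrow(1)$), which is immaterial.
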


The following theorem is the generalization
of Corollary  \ref{coro sp}  which we are  looking for.

\begin{theorem} \label{thm fund} For every $X \subseteq \cO$ and every $b \in \cO$,  we have \,\!\!
$
b \in X\b\b
$ \  if and only if \    $X\sco b$ \!\,.

 \end{theorem}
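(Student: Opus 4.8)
The plan is to reduce the claim to a short manipulation of the double-orthogonality operator, after first eliminating the intersection that appears in the definition of $\sco$. By Lemma \ref{int prop} we have $X\b = \bigcap_{a \in X}\lb a \rb\b$, so the condition $X \sco b$ is, unfolding Definition \ref{sem cons}, literally the same as $X\b \subseteq \lb b \rb\b$. Hence it suffices to prove that $b \in X\b\b$ holds precisely when $X\b \subseteq \lb b \rb\b$.

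For the direction from left to right, I would assume $b \in X\b\b$, rewrite this as $\lb b \rb \subseteq X\b\b$, apply the anti-monotonicity of orthogonality (Theorem \ref{closed}(2)) to obtain $(X\b\b)\b \subseteq \lb b \rb\b$, that is $X\b\b\b \subseteq \lb b \rb\b$, and then collapse $X\b\b\b$ to $X\b$ using Theorem \ref{closed}(4). For the converse, I would assume $X\b \subseteq \lb b \rb\b$, apply anti-monotonicity again to get $\lb b \rb\b\b \subseteq X\b\b$, and conclude $b \in X\b\b$ from $b \in \lb b \rb \subseteq \lb b \rb\b\b$ (Theorem \ref{closed}(1)). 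Composing both with Lemma \ref{int prop} yields the two implications of the statement.

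I do not anticipate a genuine obstacle: each direction is a two- or three-line chain of the closure-algebra identities already collected in Theorem \ref{closed}, together with the single rewriting provided by Lemma \ref{int prop}. The only points requiring care are the bookkeeping of inclusion directions --- orthogonality reverses them --- and the routine passage between ``$b \in Z$'' and ``$\lb b \rb \subseteq Z$'' needed so that Theorem \ref{closed}(2) can be applied to singletons. As a consistency check, specializing $X$ to a singleton $\lb a \rb$ should recover Corollary \ref{coro sp} through Proposition \ref{gen}, which is precisely the special case that this theorem is meant to generalize.
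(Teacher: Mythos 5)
Your proposal is correct and follows essentially the same route as the paper's own proof: both directions reduce $X \sco b$ to $X\b \subseteq \lb b \rb\b$ via Lemma \ref{int prop}, then use anti-monotonicity together with the identities $X\b = X\b\b\b$ and $b \in \lb b \rb\b\b$ exactly as in the paper. No gaps.
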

\begin{proof} Let $X \subseteq \cO$ , and let $b \in \cO$. Suppose that $b \in X \b\b$.
Then, we have $\lb b\rb \subseteq X \b \b$.
Thus,   $X \b \b\b = X \b  \subseteq \lb  b\rb\b $. By  Lemma \ref{int prop}, we have $X\b = \bigcap_{a \in X}\lb a \rb\b$. Hence, we obtain
 $X \sco b$ .
 As for the converse, assume that $X \sco b$,
\ie $ \bigcap_{a \in X}\lb a \rb\b \subseteq \lb b \rb\b$.  By  Lemma \ref{int prop}, we have $X\b \subseteq \lb b \rb \b$. So, we obtain
$\lb b \rb\b\b   \subseteq X\b\b$. As $b \in \lb b \rb\b\b$, we conclude $b \in X \b\b$.
\end{proof}

\subsection{Entailment Systems} \label{ent sys sec}

We now introduce the notion of entailment system.
This notion can be seen as the natural adaptation of the concept
of  \emph{information system} \cite{Scott} (see also \cite{Barwise,Zhang})
to our setting.
We do not claim at all
that our concept of entailment system  constitutes   a novelty:
 structures of the same nature
 | often called  \emph{consequence relations} | has already been studied,
for different purposes,  in the  literature of abstract algebraic logic (see \eg\cite{AAL}).

In this paper,  we  introduce this notion in
order to show that the set $\cO$ equipped by the relation
of semantical consequence $\sco$
forms an entailment system
(Theorem \ref{thm ent}). One of the consequences of this fact is that
we can use the ``structural  rules" of Proposition \ref{prop struct} to derive properties of  terms.

\begin{definition}[Entailment system] \label{ent} We call \textbf{entailment systems}
any ordered pair  $(T,\Vdash)$ where:
\sqii
\item[$\bullet$] $T$ is a set. Its elements are said to be \textbf{tokens}, and we use $u,v,w,\ldots$ to range over them.
\item[$\bullet$]  $\Vdash \ \subseteq \pow(T) \times T$ is a binary relation such that
 for each
$u \in T$,  every $U \subseteq T$ and every $V \subseteq T$:
\sqii
\item[Axiom] \ : \qquad   $u \in U$ \ \  implies \ \  $U  \Vdash u$ \;\!;
\item[Cut] \ : \qquad    $U \Vdash v $ \,\! for every $ v  \in V$ \ and \ $V  \Vdash u$ \ \  imply \ \ $U  \Vdash u$ \ .
\sqee
We call $\Vdash$ the \textbf{entailment relation} of the entailment system and read ``$U \Vdash u$" as ``$U$ entails $u$.''

\sqee
Given $U \subseteq T$,  we call $ \lb u \in T  \st  U \Vdash u \rb$ the \textbf{set
of  consequences of $U$}.
\hfill $\triangle$
 \end{definition}

\begin{proposition}[Structural rules] \label{prop struct}
Let $(T,\Vdash)$ be an entailment system.
Let $U$ and $V$ be subsets of $T$, and let $u$ and $v$  be   tokens. Then,
we  have

\smallskip

{\centering
$
\begin{array}{rlcrl}
  \!\!\!\hbox{\emph{Axiom$_0$ :}} \!\!&\!\!  \mbox{$\lb u\rb \Vdash u$ \,\!;
}
 & \phantom{as} &
   \!\!\!\!\!\!\!\!\! \!\!\!\!\!\!\!\!\hbox{\emph{Weakening :}} \!\!&\!\! \mbox{$V \Vdash u$ \,\!\! and  \,\!\! $V \subseteq U$ \ imply \ $U \Vdash u$ \,\!;  }  \\   \!\!\!\!\hbox{\emph{Cut$_0$ :}} \!\!&\!\! \mbox{$U \Vdash v$  \,\!\!  and  \,\!\!   $U\cup \lb v\rb  \Vdash u$ \ imply \ $U  \Vdash u$  \,\!;  }  & &
    \!\!\!\!\!\!\!\!\!\!\!\!\!\!\!\!\!\!\!\hbox{\emph{Transitivity :}} \!\!&\!\! \mbox{$U \Vdash v$   \,\!\!  and   \,\!\!  $\lb  v \rb  \Vdash u$ \ imply \  $U  \Vdash u$\enspace.} \\

  \end{array}$
  \par
  }

\vspace{-0.2cm}

\end{proposition}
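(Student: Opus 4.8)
The plan is to derive each of the four structural rules directly from the two defining conditions of an entailment system (Axiom and Cut), exactly as one does in the proof theory of consequence relations. None of this requires the specific triad structure; the statement is a lemma about abstract entailment systems, so I would only invoke Definition \ref{ent}.

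First I would prove Axiom$_0$: since $u \in \lb u\rb$, the Axiom condition with $U = \lb u\rb$ gives $\lb u\rb \Vdash u$ immediately. Next, Weakening: assume $V \Vdash u$ and $V \subseteq U$. For each $v \in V$ we have $v \in U$, so Axiom gives $U \Vdash v$; hence $U \Vdash v$ for every $v \in V$, and together with the hypothesis $V \Vdash u$, the Cut condition (with the roles $U := U$, $V := V$, $u := u$) yields $U \Vdash u$.

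Then I would do Transitivity: assume $U \Vdash v$ and $\lb v\rb \Vdash u$. Apply Cut with the middle set taken to be $V := \lb v\rb$: the premise ``$U \Vdash v'$ for every $v' \in \lb v\rb$'' is just $U \Vdash v$, and ``$\lb v\rb \Vdash u$'' is the second hypothesis, so Cut gives $U \Vdash u$. Finally, Cut$_0$: assume $U \Vdash v$ and $U \cup \lb v\rb \Vdash u$. First note $U \cup \lb v\rb \Vdash w$ for every $w \in U \cup \lb v\rb$: if $w \in U$ this is Axiom, and if $w = v$ it is also Axiom since $v \in U \cup \lb v\rb$ — actually more directly, I can apply Weakening to $U \Vdash v$ with $U \subseteq U \cup \lb v\rb$ to get $U\cup\lb v\rb \Vdash v$, but for the Cut step below I want entailments from $U$, not from $U\cup\lb v\rb$, so let me instead argue: from $U \Vdash v$ and Axiom ($U \Vdash w$ for each $w \in U$) we have $U \Vdash w$ for every $w \in U \cup \lb v\rb$; combining this with the hypothesis $U \cup \lb v\rb \Vdash u$ via Cut (middle set $V := U \cup \lb v\rb$) gives $U \Vdash u$.

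I do not expect any genuine obstacle here — this is the standard folklore that the ``multi-premise'' Axiom/Cut pair is interderivable with the more familiar structural rules. The only point requiring a little care is choosing the right instantiation of the (set-indexed) Cut condition in each case, in particular realizing that Transitivity and Cut$_0$ are both obtained by picking the intermediate set $V$ appropriately (a singleton $\lb v\rb$ for Transitivity, the union $U \cup \lb v\rb$ for Cut$_0$), and that Weakening needs Axiom to manufacture the family of premises $U \Vdash v$ for $v \in V$ before Cut can be applied.
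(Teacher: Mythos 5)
Your proof is correct and takes essentially the same approach as the paper: Axiom$_0$, Weakening, and Cut$_0$ are established exactly as in the paper's proof, by instantiating the Axiom condition and the set-indexed Cut condition with the appropriate middle set. The only (harmless) difference is Transitivity, which you obtain directly from Cut with middle set $\lb v \rb$, whereas the paper derives it from the already-proved Weakening and Cut$_0$; both are valid one-line arguments.
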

\begin{proof} Axiom$_0$ : \  We always have $u \in \lb u\rb$. Hence,  $\lb u\rb \Vdash u$ by Axiom.

Weakening : \  Suppose that $V \Vdash u$ and   $V\subseteq U$.
Let $v \in V$. Since $V \subseteq U$, we have $v \in U$. Hence, $U \Vdash v$ by Axiom. Since this holds for every $v \in V$, we obtain
$U \Vdash v$  for every $ v \in V$. Hence,
$U \Vdash u$ by  Cut.

Cut$_0$ : \ Suppose that $U \Vdash v$ and   $U \cup \lb v \rb \Vdash u$.
Let $w \in U \cup \lb v\rb$. If $w \in U$, then
 $U \Vdash w$
 by Axiom.
 If $w =v$, then
$U   \Vdash v$ by assumption. Hence, we have $U \Vdash w$ for every $ w \in U \cup \lb v\rb$.
By assumption,
$U \cup \lb v \rb \Vdash u$. Therefore,
$U  \Vdash u$ by Cut.

Transitivity : \  Suppose that $\lb v \rb  \Vdash u$. Then, we have
 $ U \cup \lb v \rb  \Vdash u$
 by Weakening. Since
  $ U   \Vdash v$ holds by assumption, we conclude
  $ U \Vdash u$ by  Cut$_0$.
\end{proof}

\begin{theorem} \label{thm ent} The pair $(\cO,\sco)$ is an entailment system. 
\end{theorem}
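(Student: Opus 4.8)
The plan is to verify that the relation $\sco$ on $\cO$ satisfies the two defining conditions of an entailment system, namely \emph{Axiom} and \emph{Cut}. The key tool will be Lemma~\ref{int prop}, which identifies $X\b$ with $\bigcap_{a\in X}\lb a\rb\b$, together with the elementary closure facts of Theorem~\ref{closed}. In fact, unwinding Definition~\ref{sem cons} through Lemma~\ref{int prop}, the statement $X\sco b$ is equivalent to $X\b\subseteq\lb b\rb\b$, and by Theorem~\ref{thm fund} it is also equivalent to $b\in X\b\b$. Either reformulation makes both axioms almost immediate; I would probably work with $X\b\subseteq\lb b\rb\b$ for Cut and with $b\in X\b\b$ for Axiom, citing whichever is cleaner at each point.

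For \emph{Axiom}, I would take $U\subseteq\cO$ and $u\in U$, and show $U\sco u$. Using the reformulation $U\sco u\iff u\in U\b\b$, this is immediate: $u\in U\subseteq U\b\b$ by Theorem~\ref{closed}(1). Alternatively, directly from the definition: since $u\in U$, Lemma~\ref{int prop} gives $U\b=\bigcap_{a\in U}\lb a\rb\b\subseteq\lb u\rb\b$, which is exactly $U\sco u$.

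For \emph{Cut}, suppose $U\sco v$ for every $v\in V$, and $V\sco u$; I must derive $U\sco u$. Translating via Lemma~\ref{int prop}: the first hypothesis says $U\b\subseteq\lb v\rb\b$ for every $v\in V$, hence $U\b\subseteq\bigcap_{v\in V}\lb v\rb\b=V\b$; the second hypothesis says $V\b\subseteq\lb u\rb\b$. Chaining the two inclusions yields $U\b\subseteq\lb u\rb\b$, that is, $U\sco u$. Both verifications are short; the only mild subtlety — hardly an obstacle — is keeping track of the convention that $\sco$ relates subsets of $\cO$ to elements of $\cO$, so that the intersection $\bigcap_{v\in V}\lb v\rb\b$ lives in $\overline\cO$ and the inclusion manipulations are between subsets of $\overline\cO$; this is harmless since Lemma~\ref{int prop} is stated for arbitrary $X\subseteq\cO$, including $X=V$. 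I would present the proof in two short paragraphs, one for each axiom, each a one-line inclusion argument backed by Lemma~\ref{int prop} (and, for Axiom, Theorem~\ref{closed}(1)).

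\begin{proof}
We check the two conditions of Definition~\ref{ent}.

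\emph{Axiom.} Let $U\subseteq\cO$ and $u\in U$. By Theorem~\ref{closed}(1) we have $u\in U\subseteq U\b\b$, so $u\in U\b\b$, and hence $U\sco u$ by Theorem~\ref{thm fund}.

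\emph{Cut.} Let $U,V\subseteq\cO$ and $u\in\cO$, and suppose that $U\sco v$ for every $v\in V$ and that $V\sco u$. By Lemma~\ref{int prop}, $U\b=\bigcap_{a\in U}\lb a\rb\b$ and $V\b=\bigcap_{v\in V}\lb v\rb\b$. From $U\sco v$, i.e.\ $\bigcap_{a\in U}\lb a\rb\b\subseteq\lb v\rb\b$, holding for every $v\in V$, we get $U\b=\bigcap_{a\in U}\lb a\rb\b\subseteq\bigcap_{v\in V}\lb v\rb\b=V\b$. From $V\sco u$ we get $V\b=\bigcap_{v\in V}\lb v\rb\b\subseteq\lb u\rb\b$. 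Combining the two inclusions, $U\b\subseteq\lb u\rb\b$, i.e.\ $\bigcap_{a\in U}\lb a\rb\b\subseteq\lb u\rb\b$, which is exactly $U\sco u$.
\end{proof}
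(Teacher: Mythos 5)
Your proof is correct and follows essentially the same route as the paper: both verify Axiom and Cut directly from the definition of $\sco$ as an inclusion of intersections, with Cut being a simple chaining of inclusions. The only cosmetic difference is that for Axiom you detour through Theorem~\ref{thm fund} and Theorem~\ref{closed}(1), whereas the paper just observes $\bigcap_{a\in U}\lb a\rb\b\subseteq\lb u\rb\b$ for $u\in U$ — an alternative you also note yourself.
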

\begin{proof} We have to show that
conditions Axiom and Cut of Definition
\ref{ent} hold.
As for Axiom, note that we have $\bigcap_{a \in U}\lb a \rb\b \subseteq \lb a\rb\b$ for every $a \in U$.
As for Cut, assume that
 $\bigcap_{a \in U}\lb a \rb\b \subseteq \lb b\rb\b$ for every $b \in V$ and that $\bigcap_{b \in V}\lb b \rb\b \subseteq \lb c \rb\b$.
 Then, we have
  $\bigcap_{a \in U}\lb a \rb\b \subseteq \bigcap_{b \in V}\lb b \rb\b$.
Since $\bigcap_{b \in V}\lb b \rb\b \subseteq \lb c \rb\b$, we conclude
  $\bigcap_{a \in U}\lb a \rb\b \subseteq  \lb c \rb\b$.
\end{proof}

  By   Theorem \ref{thm fund}, we have   $ \lb b \in \cO \st X \sco b \rb = X \b\b$,  for every $X \subseteq \cO$. Since, by Theorem  \ref{thm ent}, the pair
$(\cO,\sco)$ is an entailment system, we conclude that
every closed set $X\b\b$ in $\cO$ can be precisely described as 
\emph{the set of  consequences of $X$}.
 Furthermore,  by  Theorem \ref{thm ent} again, we can use the ``structural rules" of Proposition \ref{prop struct} in the  entailment system $(\cO,\sco)$.
We  now use some of them   to derive some simple properties of terms.

\begin{proposition} \label{special} For every $X \subseteq \cO$  and every
$b
\in (\emptyset_\cO)\b\b$, we have $X \sco b$.
\end{proposition}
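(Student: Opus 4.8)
The plan is to reduce the statement to Theorem~\ref{thm fund}, which tells us that $X \sco b$ is equivalent to $b \in X\b\b$. So it suffices to prove the set inclusion $(\emptyset_\cO)\b\b \subseteq X\b\b$ for every $X \subseteq \cO$. First I would observe that $\emptyset \subseteq X$ trivially holds, whence by monotonicity of the double--orthogonal operator | that is, property~(3) of Theorem~\ref{closed}, applied with $\emptyset_\cO$ in place of $X$ and $X$ in place of $Y$ | we get $(\emptyset_\cO)\b\b \subseteq X\b\b$. Then, given $b \in (\emptyset_\cO)\b\b$, this inclusion yields $b \in X\b\b$, and Theorem~\ref{thm fund} delivers $X \sco b$, as desired.

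An alternative route, more in the spirit of Subsection~\ref{ent sys sec}, is to argue via the entailment system $(\cO,\sco)$ of Theorem~\ref{thm ent}. By Theorem~\ref{thm fund} applied to the empty set, $\emptyset \sco b$ holds precisely when $b \in (\emptyset_\cO)\b\b$; so our hypothesis gives $\emptyset_\cO \sco b$. Since $\emptyset \subseteq X$, the Weakening rule of Proposition~\ref{prop struct} (which is available because $(\cO,\sco)$ is an entailment system) immediately yields $X \sco b$. Either presentation is short; I would probably give the first one, since it uses only the closure--theoretic machinery of Subsection~\ref{sub clo} together with Theorem~\ref{thm fund}.

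There is no real obstacle here: the only thing to be careful about is making sure the empty set is treated as a subset of $\cO$ (i.e.\ writing $\emptyset_\cO$ in the notation fixed in item~(4) of the conventions), so that the orthogonals $(\emptyset_\cO)\b$ and $(\emptyset_\cO)\b\b$ are computed inside $\overline{\cO}$ and $\cO$ respectively, and Theorem~\ref{closed}(3) applies verbatim. The content of the statement is essentially that the closed set $(\emptyset_\cO)\b\b$ consists of the ``semantical tautologies'' | the terms that are consequences of \emph{every} set, in particular of the empty one | which is exactly what monotonicity of $\b\b$ records.
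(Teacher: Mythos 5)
Your proof is correct, and your second ``alternative route'' (apply Theorem~\ref{thm fund} to get $\emptyset_\cO \sco b$, then conclude by Weakening from Proposition~\ref{prop struct}) is exactly the paper's own proof. Your preferred first presentation via Theorem~\ref{closed}(3) is just an unfolding of the same argument through the closure machinery, so the two are essentially identical in substance.
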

\begin{proof}
 By Theorem \ref{thm fund}, we have
 $b \in(\emptyset_\cO)\b\b$ if and only if $ \emptyset_\cO \sco b$. Thus, we can conclude
 $X \sco b$ by a simple application of Weakening.
 \end{proof}

We now show a property
which connects the relations $\sco$ and $\scd$ to the orthogonality relation
$\bot$.
\begin{proposition} \label{special 1}
Let $a$ and $a'$ be elements of $\cO$
, and let   $b$ and $b'$ be elements of $\overline{\cO}$. Suppose that \,\! $\orth{a}{b}$\,\!,\,\!
 $\lb a \rb \sco a'$ \,\!
and \,\! $\lb b \rb  \scd b' $ holds. Then,
 \,\! $\orth{a'}{b'}$ holds as well.
Graphically,

\medskip

{\centering
$
\begin{array}{ccc}
\lb a \rb & \sco &  a'   \\

 \bot & & \\
\lb b \rb  & \scd &  b'  \\
  \end{array}
  $ \qquad  implies \qquad
  $
\begin{array}{ccl}
\lb a \rb & \sco &  a'   \\

 \bot & &\!\bot \\
\lb b \rb  & \scd &  b' \enspace.  \\
  \end{array}
  $
  \par}
\end{proposition}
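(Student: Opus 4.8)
The plan is to reduce everything to membership statements about orthogonal sets and then chase inclusions. First I would rewrite the hypotheses. By Proposition~\ref{gen}, $\lb a \rb \sco a'$ is nothing but $a \spo a'$, \ie $\lb a \rb\b \subseteq \lb a'\rb\b$ (using that $\bigcap_{c\in\{a\}}\lb c\rb\b = \lb a\rb\b$); likewise $\lb b \rb \scd b'$ amounts to $\lb b \rb\b \subseteq \lb b'\rb\b$; and by Theorem~\ref{closed}(10), $\orth{a}{b}$ says $b \in \lb a \rb\b$.

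Now the chase. From $b \in \lb a \rb\b \subseteq \lb a'\rb\b$ we get $b \in \lb a'\rb\b$, which by Theorem~\ref{closed}(9) is exactly $\orth{a'}{b}$, hence $a' \in \lb b \rb\b$ by Theorem~\ref{closed}(10). Then $a' \in \lb b \rb\b \subseteq \lb b'\rb\b$ gives $a' \in \lb b'\rb\b$, and Theorem~\ref{closed}(9) once more yields $\orth{a'}{b'}$, as required. If one prefers to stay in the spirit of this subsection, the same conclusion follows purely from the entailment-system structure: by Proposition~\ref{orth prop}, $\orth{a}{b}$ is equivalent to $\lb b\rb\b \sco a$; applying Transitivity in $(\cO,\sco)$ (Theorem~\ref{thm ent}, Proposition~\ref{prop struct}) to $\lb b\rb\b \sco a$ and $\lb a\rb \sco a'$ gives $\lb b\rb\b \sco a'$, \ie $\orth{a'}{b}$ again by Proposition~\ref{orth prop}; dually, $\orth{a'}{b}$ gives $\lb a'\rb\b \scd b$, and Transitivity in $(\overline{\cO},\scd)$ with $\lb b\rb \scd b'$ yields $\lb a'\rb\b \scd b'$, whence $\orth{a'}{b'}$ by Proposition~\ref{orth prop}.

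I do not expect a real obstacle: either way the argument is a three-line inclusion chase. The one point requiring care is the orientation of the specialization relation — the paper's $\spo$ (``more special'') is the reverse of the convention used in \cite{closures} — so one must check that $\lb a\rb \sco a'$ unfolds as $\lb a\rb\b \subseteq \lb a'\rb\b$ and not the opposite inclusion; getting this backwards would make the whole argument run in the wrong direction.
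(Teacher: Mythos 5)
Your proof is correct, and your primary argument takes a more elementary route than the paper's. The paper stays entirely inside the entailment-system framework of Subsection~\ref{ent sys sec}: it converts $\orth{a}{b}$ to $\lb b\rb\b \sco a$ via Proposition~\ref{orth prop}, applies Transitivity with $\lb a\rb \sco a'$ to get $\lb b\rb\b \sco a'$, and then uses Weakening together with the inclusion $\lb b\rb\b \subseteq \lb b'\rb\b$ to reach $\lb b'\rb\b \sco a'$, hence $\orth{a'}{b'}$. Your first argument bypasses all of that: it unfolds both hypotheses into the inclusions $\lb a\rb\b \subseteq \lb a'\rb\b$ and $\lb b\rb\b \subseteq \lb b'\rb\b$ and chases the membership $b \in \lb a\rb\b$ through them using only Theorem~\ref{closed}(9)--(10); this is shorter and needs neither Proposition~\ref{orth prop} nor the structural rules, at the cost of not illustrating the entailment-system machinery the proposition is evidently meant to showcase. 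Your alternative argument is essentially the paper's proof, differing only in the last step (you dualize and apply Transitivity in $(\overline{\cO},\scd)$ where the paper applies Weakening in $(\cO,\sco)$). The only blemish is cosmetic: for $b \in \lb a'\rb\b$ with $a' \in \cO$ and $b \in \overline{\cO}$, the relevant clause is Theorem~\ref{closed}(10) rather than (9) (and symmetrically for $a' \in \lb b\rb\b$); given the symmetric abuse of notation for $\ort$, this is a label swap, not a gap.
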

\begin{proof}
 Suppose that \,\! $\orth{a}{b}$\,\!,\,\!
 $\lb a \rb \sco a'$ \,\!
and \,\! $\lb b \rb  \scd b' $. Then, we have $\lb b \rb\b \sco a$ by Proposition \ref{orth prop}. From this and
$\lb a \rb \sco a'$ we obtain
$\lb b \rb\b \sco a'$ by Transitivity.
Since $\lb b \rb  \scd b' $ is equivalent to $\lb b \rb\b \subseteq \lb b' \rb\b$,
we obtain $\lb b' \rb\b \sco a'$ by Weakening. By
Proposition \ref{orth prop},
this means  $\orth{a'} {b'}$.
\end{proof}

\section{Functionals}
\label{functionals}

In this section, we introduce
the notion of functional in our setting
and study some properties of functionals.
In Subsection \ref{fun subsec}
we give the formal definition of functionals, and in Subsection
\ref{sub prop fun} we define and study the notions of
continuous functional and functional
which preserves the relation of semantical consequence.
Finally, in Subsection \ref{lin sec}
we introduce the notion of regular functional and show that regular functionals are continuous and
preserve the relation of semantical consequence.

\subsection{Functionals} \label{fun subsec}

We now define the concept of collection of functionals for a triad.
Recall that 
$ A= (\cP,\cN,\ort)$ is the arbitrary triad that we fixed at the beginning of Section \ref{closed sets}. Also, remember that 
$\dom(A)$, the domain of $A$, is the set $\cP \cup \cN$ (see Definition \ref{defss}).

\begin{definition}[Collection of functionals for a triad] \label{prefu def}
A \textbf{collection of functionals for} $A$ is an ordered pair $F = (\cF_F,\widehat{\phantom{a}}^F)$
where:
\squishlist
\item $\cF_F = \lb f,g,h ,\ldots \rb $ is a set. Its members  are called \textbf{functionals}.
\item $\widehat{\phantom{a}}^F$ is a function, that we call \textbf{interpretation}, which maps 
each functional  $ f \in \cF_F$ to a function $\widehat{f}^F$  from $\dom(A)$ to $\dom(A)$.
Furthermore, the function
$\widehat{\phantom{a}}^F$ has to satisfy
 the  following condition, called  \textbf{preservation of polarity}:

\vspace{-0.45cm}
\begin{equation}\tag*{$\triangle$}
\mbox{$\widehat{f}^F(p) \in \cP$  \ \  and \ \  $\widehat{f}^F(n) \in \cN$\enspace, \qquad for every $f \in \cF_F$,  every $p \in \cP$  and every $n \in \cN$\enspace.}
\end{equation}
\sqe
\end{definition}

\begin{example} \label{exfu} Let $I$ be the triad  given in Example \ref{ex triad}. Let $W \eqdef (\cF_W,\widehat{\phantom{a}}^W)$
be  the pair given by:
\squishlist
\item
$\cF_W \eqdef \lb \sharp, \flat, \natural \rb$, where  $\sharp$, $\flat$ and $\natural$
are just three pairwise distinct symbols.
\item   $\widehat{\phantom{a}}^W$ is the function which maps $\sharp$, $\flat$ and $\natural$ to the functions $\widehat{\sharp}^W$,
$\widehat{\flat}^W$ and $\widehat{\natural}^W$  
from $\dom(I)$ to $\dom(I)$ respectively
given as follows:

\smallskip

{\centering
$\widehat{\sharp}^W(a)  \eqdef  $ $\left\{
  \begin{array}{ll}
  (1,P) &  \hbox{if $a \in \cP_I$} \\
      (1,N) &  \hbox{if $a \in \cN_I$\enspace,}
\end{array}
\right.$ \quad 
$\widehat{\flat}^W(a)   \eqdef  $ $\left\{
  \begin{array}{ll}
  (1,P) &  \hbox{if $a \in \lb (0,P), (1,P)\rb $} \\
  (2,P) &  \hbox{if $a = (2,P)$} \\
      (1,N) \enspace&  \hbox{if $a \in \cN_I$\enspace,}
\end{array}
\right.$\quad
$\widehat{\natural}^W(a)   \eqdef  a$\enspace,

\par}

\noindent for $a \in \dom(I)$. Note that the function $\widehat{\phantom{a}}^W$ satisfies the condition of preservation of polarity.
\squishend 
According to our definition, the pair 
$W$ is a   collection of functionals for $I$.
\hfill $\triangle$
\end{example}
From now on, up to the end of the paper, we fix
 an arbitrary     collection of functionals
$F= (\cF_F,\widehat{\phantom{a}}^F)$   for $A$  and an arbitrary functional $f \in \cF_F$.
To ease notation, in the sequel
 we  write  $\cF$,  $\widehat{\phantom{a}}$
 and $\widehat{f}$  for $\cF_F$,  $\widehat{\phantom{a}}^F$ and $\widehat{f}^F$, respectively. Similarly, we write 
 $\widehat{\sharp}$,
$\widehat{\flat}$ and $\widehat{\natural}$
for  $\widehat{\sharp}^W$,
$\widehat{\flat}^W$ and $\widehat{\natural}^W$, respectively. 

Let us now discuss  Definition \ref{prefu def}.

 Intuitively, if we think of
the triad $A$  as a structure (\ie model) for a first--order language, then
\squishlist
\item 
$\cF$ can be seen as the set
of (unary) function symbols
of a first--order language;
\item
  $\widehat{\phantom{a}}$ can be seen as an interpretation of the function symbols in  the structure
 $A$   \ie as
 a function which
maps each
function symbol $f$ in $\cF$ to
a (unary) function $\widehat{f}$ from the domain of $A$ to itself.
\squishend
With this analogy in mind,
it is clear 
 that functionals are \emph{not} required to be functions.
For instance, in Example \ref{exfu},  the \emph{symbols} $\sharp$ and $\flat$ and $\natural$ are certainly not functions,
but  they are \emph{interpreted} in
the triad $I$ as the functions
$\widehat{\sharp}$, $\widehat{\flat}$
and $\widehat{\natural}$ 
from 
domain  of $I$
 to itself given above.

In this paper, we are not considering functionals because we want to form a category, say with $\dom(A)$ as the unique object  and with $\cF$ as the collection of morphisms (essentially, this is what is done in \cite{DBLP:journals/entcs/BasaldellaST10}).
In fact, 
the set  $\cF$ need not  contain any functional  intended to be
the identity morphism of $\dom(A)$. Also,  functionals are  not equipped with any  operation of  composition.
  In this article,  we  want to study functionals from a  different point of view. Namely, we want to  analyze their
relationship with the notions of
closed set,  continuity,    semantical consequence,
and   regularity.

The condition of preservation of polarity comes
 from ludics: in that setting, functionals (as defined in  \cite{DBLP:journals/entcs/BasaldellaST10})
always satisfy this property (see also Example \ref{fun lud}).
Except for this condition, note that
we do not  impose any  restriction on the nature 
of the interpretation function
 $\widehat{\phantom{a}}$. In particular,
it may happen that the interpretation  function
$\widehat{\phantom{a}}$  maps
 two \emph{distinct} functionals $g$ and $h$ in $\cF$ 
   to the
 \emph{same} function.

 \subsection{Properties of Functionals}
 \label{sub prop fun}

In this subsection, we relate functionals to closed sets, continuity,  and the  relation of semantical consequence.
To begin with,  it is convenient to  introduce some auxiliary notions and notation.

\begin{definition}[Image, pre--image]
Let
$X \subseteq \cO$. We call \textbf{image of $X$ under $f$} and \textbf{pre--image of $X$ under $f$} the subsets $f^\rightarrow(X)$
and $f^\leftarrow(X)$ of $\cO$ given by:

\smallskip

{\centering
$ f^\rightarrow(X) \ \eqdef \ \lb \widehat{f}(a) \st  a \in X   \rb$ \quad \  and \quad \
$ f^\leftarrow(X) \ \eqdef \  \lb a  \st \widehat{f}(a) \in X   \rb$ \enspace,
\par}
\smallskip
\noindent respectively. Equivalently, for
$a \in \cO$ we have
\vspace{-0.3cm}
\begin{equation}\tag*{$\triangle$}
\mbox{$
a \in f^\rightarrow(X) \ \reldef \ a = \widehat{f}(b)$   for some $b \in X$
\qquad \ and  \qquad \
$a \in f^\leftarrow(X) \ \reldef \ \widehat{f}(a) \in X\enspace.$}
\end{equation}
\end{definition}

\begin{example} \label{ex con}
 Let $I$ be the triad  given in Example \ref{ex triad}. Let $W$ be the collection of functionals for $I$ given in Example \ref{exfu}. We calculate the pre--images of some closed sets in
 $\cP_I$ and $\cN_I$ (\cf Example \ref{ex triad cont}).
\sqi
\item For $X \subseteq \cP_I$, we have

\vspace{-0.15cm}

{\centering
$\sharp^\leftarrow(X^{\bot_I\bot_I})$ \  = \
$\left\{
  \begin{array}{ll}
\emptyset_{\cP_I} &  \hbox{if  $X \in \lb \emptyset_{\cP_I}, \lb (0,P)\rb ,
\lb (2,P)\rb \rb$} \\
     \cP_I &  \hbox{otherwise\enspace.}
\end{array}
\right.$
 \par }

\noindent Similarly,  for $X \subseteq  \cN_I$, we have

\vspace{-0.15cm}

{\centering
$\sharp^\leftarrow(X^{\bot_I\bot_I})$   =
$\left\{
  \begin{array}{ll}
\emptyset_{\cN_I} &  \hbox{if  $X \in \lb \emptyset_{\cN_I}, \lb (0,N)\rb ,
\lb (2,N)\rb \rb$} \\
     \cN_I &  \hbox{otherwise\enspace.}
\end{array}
\right.$
\par }

\item We have $\flat^{\leftarrow}(\lb (1,P)\rb^{\bot_I\bot_I}) = \lb (0,P), (1,P)\rb $,  and
 $\flat^\leftarrow(X^{\bot_I\bot_I}) =
\sharp^\leftarrow(X^{\bot_I\bot_I})$ for every  $X \subseteq  \cN_I$.
\item We have
$\natural^{\leftarrow}(X ^{\bot_I\bot_I}) = X^{\bot_I\bot_I}$ for every $X \subseteq \cP_I$, and
$\natural^{\leftarrow}(X^{\bot_I\bot_I}) = X^{\bot_I\bot_I}$ for every $X \subseteq \cN_I$. \hfill $\triangle$
\sqe\end{example}

The following lemma  establishes some simple but fundamental  facts that we need in the sequel.

\begin{lemma} \label{lem image}
 Let
$X$ and $Y$ be subsets of $\cO$. Then, we have:

\smallskip

{\centering

$
\begin{array}{rlcrl}
\hbox{\emph{(1)}} &  \mbox{$X \subseteq Y$ \ implies \ $f^\rightarrow(X)  \subseteq  f^\rightarrow(Y)$ \!\ ;}
 & \phantom{as} &
\hbox{\emph{(2)}} &   \mbox{$X \subseteq Y$ \ implies \ $f^\leftarrow(X)  \subseteq  f^\leftarrow(Y)$ \!\ ;}   \\
\hbox{\emph{(3)}}  &\mbox{$ f^\rightarrow\big(f^\leftarrow(X)\big)\subseteq  X$ \!\ ;}  & \phantom{as} &
\hbox{\emph{(4)}} &  \mbox{$ X  \subseteq  f^\leftarrow\big(f^\rightarrow(X)\big)$\enspace.}
  \end{array}
$

\par}

\vspace{-0.2cm}

\end{lemma}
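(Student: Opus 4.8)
The plan is to prove each of the four inclusions directly from the definitions of $f^\rightarrow$ and $f^\leftarrow$, since all four are standard facts about direct and inverse images of a function (here $\widehat{f}$ restricted to $\cO$, which by preservation of polarity maps $\cO$ into $\cO$). Each part is a short chase through the defining biconditionals, so I would not expect any real obstacle; the only thing to be careful about is bookkeeping between the ``$a = \widehat{f}(b)$ for some $b \in X$'' form of membership in the image and the ``$\widehat{f}(a) \in X$'' form for the pre-image.

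For (1), I would take $a \in f^\rightarrow(X)$, so $a = \widehat{f}(b)$ for some $b \in X$; since $X \subseteq Y$ we have $b \in Y$, hence $a = \widehat{f}(b) \in f^\rightarrow(Y)$. For (2), take $a \in f^\leftarrow(X)$, so $\widehat{f}(a) \in X \subseteq Y$, hence $\widehat{f}(a) \in Y$, i.e.\ $a \in f^\leftarrow(Y)$. For (3), take $a \in f^\rightarrow\big(f^\leftarrow(X)\big)$, so $a = \widehat{f}(b)$ for some $b \in f^\leftarrow(X)$; but $b \in f^\leftarrow(X)$ means precisely $\widehat{f}(b) \in X$, and $\widehat{f}(b) = a$, so $a \in X$. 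For (4), take $a \in X$; to show $a \in f^\leftarrow\big(f^\rightarrow(X)\big)$ we must check $\widehat{f}(a) \in f^\rightarrow(X)$, which holds because $\widehat{f}(a) = \widehat{f}(b)$ with the witness $b \eqdef a \in X$.

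The only point worth a sentence in the write-up is that all the sets involved genuinely live in $\cO$: preservation of polarity guarantees $\widehat{f}$ sends $\cP$ into $\cP$ and $\cN$ into $\cN$, so for $X \subseteq \cO$ both $f^\rightarrow(X)$ and $f^\leftarrow(X)$ are again subsets of $\cO$, and the statement is well-typed. With that observed, the four chases above complete the proof, and I would present them in exactly the order (1), (2), (3), (4) as above, each in one or two lines.

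There is no substantive obstacle here; the ``hard part,'' such as it is, is merely resisting the temptation to invoke generalities about images and instead spelling out the one-line arguments, since the paper's style (cf.\ the proofs of Theorem~\ref{closed} and Lemma~\ref{int prop}) is to expand such biconditional chases explicitly.

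\begin{proof}
First note that, by preservation of polarity, $\widehat{f}$ maps $\cP$ into $\cP$ and $\cN$ into $\cN$; hence for $X \subseteq \cO$ both $f^\rightarrow(X)$ and $f^\leftarrow(X)$ are subsets of $\cO$, so the four claims are well--posed.

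(1) : \ Let $a \in f^\rightarrow(X)$. Then $a = \widehat{f}(b)$ for some $b \in X$. Since $X \subseteq Y$, we have $b \in Y$, so $a = \widehat{f}(b) \in f^\rightarrow(Y)$.

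(2) : \ Let $a \in f^\leftarrow(X)$. Then $\widehat{f}(a) \in X$. Since $X \subseteq Y$, we have $\widehat{f}(a) \in Y$, so $a \in f^\leftarrow(Y)$.

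(3) : \ Let $a \in f^\rightarrow\big(f^\leftarrow(X)\big)$. Then $a = \widehat{f}(b)$ for some $b \in f^\leftarrow(X)$. By definition of pre--image, $b \in f^\leftarrow(X)$ means $\widehat{f}(b) \in X$. Since $\widehat{f}(b) = a$, we conclude $a \in X$.

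(4) : \ Let $a \in X$. Then $\widehat{f}(a) = \widehat{f}(b)$ for $b \eqdef a \in X$, so $\widehat{f}(a) \in f^\rightarrow(X)$ by definition of image. Hence $a \in f^\leftarrow\big(f^\rightarrow(X)\big)$ by definition of pre--image.
\end{proof}
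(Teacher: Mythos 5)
Your proof is correct and matches the paper's own proof essentially line for line: each of the four parts is the same element-chasing argument from the definitions of image and pre--image. The added remark on preservation of polarity ensuring everything stays in $\cO$ is harmless extra care that the paper omits.
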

\begin{proof}
(1) : \  Suppose that $a \in f^\rightarrow(X)$, and assume $X \subseteq Y$. Then, $a = \widehat{f}(b)$ for some $b \in X$, by definition of image.
Since $X \subseteq Y$, we have
$b \in Y$.  So, $a = \widehat{f}(b)$ for some $b \in Y$.
Thus,   $a \in f^\rightarrow(Y)$.

(2) : \  Suppose that $a \in f^\leftarrow(X)$, and assume $X \subseteq Y$. Then, $ \widehat{f}(a) \in X$, by definition of pre--image.
Since $X \subseteq Y$, we have
$ \widehat{f}(a) \in Y$. Therefore,    $a \in f^\leftarrow(Y)$.

(3) : \  Suppose that $a \in f^\rightarrow\big(f^\leftarrow(X)\big)$. Then, $a = \widehat{f}(b)$ for some $b \in f^\leftarrow(X)$, by definition of image. Also, we have $\widehat{f}(b)  \in X$ by definition of pre--image. Hence, $a \in X$.

(4) : \  Suppose that $a \in X$. Then, $\widehat{f}(a)  \in f^\rightarrow(X)$  by definition of image.  Hence,
$a \in f^\leftarrow\big(f^\rightarrow(X)\big)$, by definition of pre--image.
 \end{proof}

We are now in position to define the notion of continuous functional.
We recall from \cite{closures} that
a \emph{continuous function} from a closure space
$(X,\Gamma)$ to a closure space $(Y,\Delta)$ (here $X,Y$ are sets and $\Gamma,\Delta$ are closure operators) is a function
$F$ from $X$ to $Y$
such that for every closed set $W$ in $(Y,\Delta)$  the pre--image of $W$ under $F$  is a  closed set in $(X,\Gamma)$.
In our setting, we define the concept
of continuity for functionals in a similar fashion.

\begin{definition}[Continuous functional]
 We say that
$f$ is \textbf{continuous in $\cO$} if
for every $X \subseteq \cO$ the set
$f^\leftarrow(X\b\b)$
  is  a closed set in $\cO$.
That is,
  \vspace{-0.25cm}
\begin{equation}\tag*{$\triangle$}
\mbox{$  f^\leftarrow(X\b\b) \b\b \ = \  f^\leftarrow(X\b\b)\enspace.$}
\end{equation}
\end{definition}

\begin{example} \label{ex fin} In the same notation
of Example \ref{ex con},  the following facts hold.
\sqi
\item The functional $\sharp$ is continuous in $\cP_I$
and $\cN_I$.

\item The functional $\flat$ is \emph{not} continuous
in $\cP_I$ (because $\flat^{\leftarrow}(\lb (1,P)\rb^{\bot_I\bot_I}) = \lb (0,P), (1,P)\rb$ and $\lb (0,P), $ \linebreak $ (1,P)\rb^{\bot_I\bot_I} = \cP_I \neq \lb (0,P), (1,P)\rb$).
On the other hand, the functional $\flat$ is continuous in $\cN_I$.

\item The functional $\natural$ is continuous in $\cP_I$
and $\cN_I$. \hfill $\triangle$
\sqe\end{example}

We now define the notion of
preservation of the relation of specialization for functionals.

\begin{definition}[Preservation of the relation of specialization $\spo$]
We say that
$f$ \textbf{preserves the relation of specialization  $\spo$} if
\vspace{-0.30cm}
\begin{equation}\tag*{$\triangle$}
\mbox{$a \spo b$  \ implies \ $\widehat{f}(a) \spo  \widehat{f}(b)$\enspace, \qquad for   every $a$ and $b$ in $\cO$\enspace.
}
\end{equation}

\end{definition}
 Generalizing  the previous notion, we   naturally obtain  the   definition of preservation  of the relation of semantical consequence.

\begin{definition}[Preservation of the relation of semantical consequence $\sco$]
 We say that
$f$ \textbf{preserves the relation of semantical consequence $\sco$} if

\vspace{-0.3cm}
\begin{equation}\tag*{$\triangle$}
\mbox{$X \sco b$  \ implies \ $f^\rightarrow(X) \sco  \widehat{f}(b)$\enspace, \qquad for   every $X \subseteq \cO$
and every $b \in \cO$\enspace.}
\end{equation}

\end{definition}

The following  theorem gives us some equivalent characterizations of the notion of continuity for functionals.
The most important one is the equivalence between (1)
and (4),
because it allows us to understand
the  concept  of continuity in $\cO$  as an
``inference rule" of the entailment system
$(\cO,\sco)$.

\begin{theorem}[Equivalent characterizations of continuity] \label{equiv}
The following statements are equivalent.

\smallskip

{\centering
$
\begin{array}{rlcrl}
\hbox{\emph{(1)}} &  \mbox{$f$ is continuous in $\cO$ \!\ ;}
 & \phantom{as} &
\hbox{\emph{(2)}} &   \mbox{$f^\leftarrow(X)\b\b \subseteq f^\leftarrow(X \b\b)$, \!\ for every $X \subseteq \cO$ \!\ ;}   \\
\hbox{\emph{(3)}}  &\mbox{$f^\rightarrow(X\b\b) \subseteq f^\rightarrow(X)\b\b$, \!\ for every $X \subseteq \cO$ \!\ ;}  & \phantom{as} &
\hbox{\emph{(4)}} &  \mbox{$f$ preserves  $\sco$\enspace.}
  \end{array}
$

\par}

\vspace{-0.2cm}

\end{theorem}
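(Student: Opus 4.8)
The plan is to prove the four statements equivalent by establishing the cycle (1) $\Rightarrow$ (2) $\Rightarrow$ (3) $\Rightarrow$ (4) $\Rightarrow$ (1), using throughout the closure-operator properties of Theorem~\ref{closed}, the basic facts about images and pre-images in Lemma~\ref{lem image}, and the translation between $\b\b$ and $\sco$ given by Theorem~\ref{thm fund}. The key technical observations I expect to need repeatedly are: first, that $f^\leftarrow$ and $f^\rightarrow$ are monotone (Lemma~\ref{lem image}(1),(2)); second, the unit and counit inequalities $X \subseteq f^\leftarrow(f^\rightarrow(X))$ and $f^\rightarrow(f^\leftarrow(X)) \subseteq X$ (Lemma~\ref{lem image}(3),(4)); and third, that applying $\b\b$ is monotone and idempotent and that $Z\b\b = Z$ precisely when $Z$ is closed.

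For (1) $\Rightarrow$ (2): assume $f$ is continuous. Fix $X \subseteq \cO$. From $X \subseteq X\b\b$ and monotonicity of $f^\leftarrow$ we get $f^\leftarrow(X) \subseteq f^\leftarrow(X\b\b)$; applying $\b\b$ and using that $f^\leftarrow(X\b\b)$ is closed by continuity yields $f^\leftarrow(X)\b\b \subseteq f^\leftarrow(X\b\b)\b\b = f^\leftarrow(X\b\b)$, which is (2). For (2) $\Rightarrow$ (3): fix $X$. Put $Y \eqdef f^\rightarrow(X)$. By Lemma~\ref{lem image}(4), $X \subseteq f^\leftarrow(f^\rightarrow(X)) = f^\leftarrow(Y)$, so $X\b\b \subseteq f^\leftarrow(Y)\b\b$, and by (2) applied to $Y$ this is contained in $f^\leftarrow(Y\b\b)$. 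Now apply $f^\rightarrow$ (monotone) and then Lemma~\ref{lem image}(3): $f^\rightarrow(X\b\b) \subseteq f^\rightarrow(f^\leftarrow(Y\b\b)) \subseteq Y\b\b = f^\rightarrow(X)\b\b$, which is (3).

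For (3) $\Rightarrow$ (4): suppose $X \sco b$, i.e.\ (Theorem~\ref{thm fund}) $b \in X\b\b$. Then $\widehat{f}(b) \in f^\rightarrow(X\b\b) \subseteq f^\rightarrow(X)\b\b$ by (3), so by Theorem~\ref{thm fund} again $f^\rightarrow(X) \sco \widehat{f}(b)$, which is exactly preservation of $\sco$. For (4) $\Rightarrow$ (1): fix $X \subseteq \cO$; we must show $f^\leftarrow(X\b\b)$ is closed, and since $\subseteq$ always holds it suffices to show $f^\leftarrow(X\b\b)\b\b \subseteq f^\leftarrow(X\b\b)$. Let $a \in f^\leftarrow(X\b\b)\b\b$; writing $Z \eqdef f^\leftarrow(X\b\b)$, this says $Z \sco a$ by Theorem~\ref{thm fund}. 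By (4), $f^\rightarrow(Z) \sco \widehat{f}(a)$, i.e.\ $\widehat{f}(a) \in f^\rightarrow(Z)\b\b$. But $f^\rightarrow(Z) = f^\rightarrow(f^\leftarrow(X\b\b)) \subseteq X\b\b$ by Lemma~\ref{lem image}(3), hence $f^\rightarrow(Z)\b\b \subseteq (X\b\b)\b\b = X\b\b$, so $\widehat{f}(a) \in X\b\b$, which means $a \in f^\leftarrow(X\b\b) = Z$. This closes the cycle.

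The main obstacle — though a mild one — is keeping the direction of the inclusions straight through the two "mixed" steps (2) $\Rightarrow$ (3) and (4) $\Rightarrow$ (1), where one has to interleave a $f^\rightarrow$/$f^\leftarrow$ adjunction inequality with an application of $\b\b$ in the right order; the proof hinges on the fact that $f^\rightarrow(f^\leftarrow(-)) \subseteq \mathrm{id} \subseteq f^\leftarrow(f^\rightarrow(-))$ together with monotonicity of $\b\b$, and nothing deeper. No separate argument is needed for preservation of the specialization relation $\spo$ here, since that is the case $X = \lb a \rb$ of preservation of $\sco$ (Proposition~\ref{gen}), and will presumably be recorded as a corollary.
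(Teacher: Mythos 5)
Your proof is correct and follows essentially the same route as the paper: the same cycle $(1)\Rightarrow(2)\Rightarrow(3)\Rightarrow(4)\Rightarrow(1)$, using the same auxiliary substitutions ($Y=f^\rightarrow(X)$ in $(2)\Rightarrow(3)$, $Y=f^\leftarrow(X\b\b)$ in $(4)\Rightarrow(1)$) and the same appeals to Lemma~\ref{lem image} and Theorem~\ref{thm fund}. Nothing to add.
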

\begin{proof}
(1) implies (2) : \
 Let $X \subseteq \cO$.
  As $X \subseteq X\b\b$,
we have $f^\leftarrow(X) \subseteq f^\leftarrow(X\b\b)$, by Lemma \ref{lem image}(2). Hence,
 $f^\leftarrow(X) \b\b  \subseteq f^\leftarrow(X\b\b)\b\b$.
 Assume that $f$ is continuous in $\cO$.
We have $ f^\leftarrow(X\b\b)\b\b =  f^\leftarrow(X\b\b)$.
 Therefore,
 $f^\leftarrow(X)\b\b \subseteq f^\leftarrow(X \b\b)$.

 (2) implies (3) : \
 Let $X \subseteq \cO$.
 Let $Y \eqdef f^\rightarrow(X)$.
 By Lemma \ref{lem image}(4), we have
 $ X  \subseteq  f^\leftarrow\big(f^\rightarrow(X)\big) = f^\leftarrow(Y)$. So,
  $ X \b\b  \subseteq  f^\leftarrow(Y)\b\b$.
 Assume that (2) holds. We have
 $f^\leftarrow(Y)\b\b \subseteq f^\leftarrow(Y \b\b)$. Thus,
 $ X \b\b  \subseteq f^\leftarrow(Y \b\b)$.  By Lemma \ref{lem image}(1),
 we obtain
  $ f^\rightarrow(X \b\b)  \subseteq f^\rightarrow\big(f^\leftarrow(Y \b\b)\big)$.
  By Lemma  \ref{lem image}(3),
  we have $ f^\rightarrow\big(f^\leftarrow(Y \b\b)\big) \subseteq Y\b\b$.
 Therefore, $f^\rightarrow(X \b\b)  \subseteq  Y\b\b =  f^\rightarrow(X)\b\b$.

 (3) implies (4) : \
 Let $X \subseteq \cO$, and let
$b \in \cO$. Suppose that
$X \sco b$. By Theorem \ref{thm fund},
this means $b \in X\b\b$.
So,  we have  $\widehat{f}(b) \in f^\rightarrow(X\b\b) $ by  definition of image. Assume that (3) holds.
Then, we have
$ f^\rightarrow(X\b\b) \subseteq  f^\rightarrow(X)\b\b$.
Thus, $\widehat{f}(b) \in
f^\rightarrow(X)\b\b$.
By Theorem \ref{thm fund},
this is equivalent to  $f^\rightarrow(X) \sco  \widehat{f}(b)$.

 (4) implies (1) : \ Let $X \subseteq \cO$, and  let $Y\eqdef f^\leftarrow(X\b\b)$.
 We have to show that $Y \b\b = Y$. Clearly,
 $Y \subseteq Y\b\b$.
  To show the converse, let
 $b \in Y\b\b$. By  Theorem \ref{thm fund}, this means $ Y \sco b$.     Assume that (4) holds.
  Then, we have $f^\rightarrow(Y) \sco \widehat{f}(b)$.
  By  Theorem \ref{thm fund} again, this means $ \widehat{f}(b) \in f^\rightarrow(Y)\b\b$.
Hence, we have
   $\widehat{f}(b) \in f^\rightarrow(Y)\b\b
   = f^\rightarrow\big(f^\leftarrow(X\b\b)\big)\b\b \subseteq (X\b\b\big)\b\b = X\b\b$,
   by using Lemma \ref{lem image}(3).
 Since    $\widehat{f}(b) \in   X\b\b$,
we have $b \in f^\leftarrow(X\b\b)$
 by definition of pre--image.  Since
 $ f^\leftarrow(X\b\b) =  Y$, we conclude $Y\b\b \subseteq   Y$. This shows that $f$ is continuous in $\cO$.
\end{proof}

Analogously to what happens in the theory of closure spaces, in our setting
 we have  that
 continuous functionals
preserve the relation of
specialization. Before showing this, we now prove a simple lemma.

\begin{lemma} \label{lem im}
Let $a \in \cO$. Then, we have
$\lb \widehat{f}(a) \rb \   = \  f^\rightarrow(\lb a \rb)  $.
\end{lemma}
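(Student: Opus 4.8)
The plan is to unfold the definition of image applied to a singleton. Recall that for a subset $X \subseteq \cO$ we have $f^\rightarrow(X) = \lb \widehat{f}(b) \st b \in X \rb$, so applying this with $X = \lb a \rb$ gives $f^\rightarrow(\lb a \rb) = \lb \widehat{f}(b) \st b \in \lb a \rb \rb$. Since the only element of $\lb a \rb$ is $a$ itself, the set $\lb \widehat{f}(b) \st b \in \lb a \rb \rb$ is exactly $\lb \widehat{f}(a) \rb$, which is what we want.

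Concretely, I would prove the two inclusions, or more directly argue by an ``iff'' chain on membership: for any $c \in \cO$, we have $c \in f^\rightarrow(\lb a \rb)$ iff $c = \widehat{f}(b)$ for some $b \in \lb a \rb$ iff $c = \widehat{f}(a)$ iff $c \in \lb \widehat{f}(a) \rb$. This is the same style of one-line membership argument used in the proofs of Lemma \ref{int prop} and Lemma \ref{lem image}, so it fits the surrounding text.

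There is no real obstacle here; the only thing to be a little careful about is that $\widehat{f}(a) \in \cO$, so that $\lb \widehat{f}(a) \rb$ is legitimately a subset of $\cO$ and the statement typechecks — but this is guaranteed by the preservation of polarity condition in Definition \ref{prefu def} (if $a \in \cP$ then $\widehat{f}(a) \in \cP$, and likewise for $\cN$), so whichever of $\cP$ or $\cN$ the set $\cO$ denotes, $\widehat{f}(a)$ stays inside it. Here is the proof:

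\begin{proof}
Let $c \in \cO$. By definition of image, we have $c \in f^\rightarrow(\lb a \rb)$ if and only if $c = \widehat{f}(b)$ for some $b \in \lb a \rb$, that is, if and only if $c = \widehat{f}(a)$, \ie if and only if $c \in \lb \widehat{f}(a) \rb$. Hence $f^\rightarrow(\lb a \rb) = \lb \widehat{f}(a) \rb$.
\end{proof}
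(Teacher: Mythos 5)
Your proof is correct and is essentially identical to the paper's: both unfold the definition of image on the singleton $\lb a \rb$ via a one-line chain of membership equivalences. The extra remark about preservation of polarity guaranteeing $\widehat{f}(a) \in \cO$ is a sensible sanity check but not needed for the argument.
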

\begin{proof}
Let $b \in \cO$. We have, by using the definition of image:

\vspace{-0.70cm}
\begin{equation}\tag*{$\qedhere \square$}
\mbox{$
b \in \lb \widehat{f}(a) \rb   \quad $iff$ \quad b = \widehat{f}(a)
  \quad $iff$ \quad b = \widehat{f}(c)$ for some $c \in \lb a\rb    \quad $iff$ \quad  b \in f^\rightarrow(\lb a \rb) \enspace.$}
\end{equation}
\end{proof}

\begin{corollary} \label{coroll} Suppose that 
 $f$ is  continuous in $\cO$.
Then, $f$  preserves the relation of specialization $\spo$.
\end{corollary}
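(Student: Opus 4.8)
The plan is to reduce the claim to the equivalence between continuity and preservation of $\sco$ established in Theorem~\ref{equiv}, together with the fact that $\spo$ is the singleton-instance of $\sco$ (Proposition~\ref{gen}). Concretely, suppose $f$ is continuous in $\cO$. By Theorem~\ref{equiv} (the implication (1) implies (4)), $f$ preserves the relation of semantical consequence $\sco$; that is, for every $X \subseteq \cO$ and every $b \in \cO$, $X \sco b$ implies $f^\rightarrow(X) \sco \widehat{f}(b)$.

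Now take arbitrary $a$ and $b$ in $\cO$ with $a \spo b$. By Proposition~\ref{gen}, this is equivalent to $\lb a \rb \sco b$. Applying preservation of $\sco$ with $X \eqdef \lb a \rb$, we obtain $f^\rightarrow(\lb a \rb) \sco \widehat{f}(b)$. By Lemma~\ref{lem im}, $f^\rightarrow(\lb a \rb) = \lb \widehat{f}(a) \rb$, so this reads $\lb \widehat{f}(a) \rb \sco \widehat{f}(b)$. Invoking Proposition~\ref{gen} once more in the other direction, this is equivalent to $\widehat{f}(a) \spo \widehat{f}(b)$. Since $a$ and $b$ were arbitrary, $f$ preserves the relation of specialization $\spo$, as desired.

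There is essentially no obstacle here: all the work has already been done in Theorem~\ref{equiv} (whose proof carries the real content, via the image/pre-image calculus of Lemma~\ref{lem image} and the correspondence $b \in X\b\b \iff X \sco b$ of Theorem~\ref{thm fund}), and the present corollary is just the specialization of that theorem to singletons, glued together by the trivial Lemma~\ref{lem im}. The only point deserving a word of care is the direction of the equivalences in Proposition~\ref{gen}: one uses it first to pass from $a \spo b$ to $\lb a \rb \sco b$, and then again to pass from $\lb \widehat{f}(a)\rb \sco \widehat{f}(b)$ back to $\widehat{f}(a) \spo \widehat{f}(b)$. Both directions are available since Proposition~\ref{gen} is stated as an ``if and only if.''
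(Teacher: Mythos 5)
Your proof is correct and follows exactly the same route as the paper's: apply Theorem~\ref{equiv}((1) implies (4)) to get preservation of $\sco$, translate $a \spo b$ into $\lb a \rb \sco b$ via Proposition~\ref{gen}, use Lemma~\ref{lem im} to identify $f^\rightarrow(\lb a \rb)$ with $\lb \widehat{f}(a) \rb$, and translate back. Nothing to add.
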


\begin{proof}
Suppose that $f $   is continuous in $\cO$. Then,
by Theorem \ref{equiv}((1) implies (4)), the functional  $f$ preserves the relation of semantical consequence $\sco$. Let $a$ and $b$ in $\cO$, and suppose that
$a \spo b$. By Proposition \ref{gen},   $ a  \spo b$ is equivalent to  $\lb a \rb \sco b$.
So,  we obtain  $f^\rightarrow(\lb a \rb) \sco \widehat{f}(b)$
by preservation of $\sco$.
By Lemma \ref{lem im}, we have $f^\rightarrow(\lb a \rb) = \lb \widehat{f}(a) \rb $. Therefore,
 $\lb \widehat{f}(a) \rb \sco \widehat{f}(b)$. The latter  is equivalent to
$\widehat{f}(a) \spo \widehat{f}(b)$, by using Proposition \ref{gen} again.
\end{proof}

\subsection{Regularity} \label{lin sec}

We now introduce the concept
of regular  functional.
The reason for introducing this concept comes from ludics:
 in that setting every functional (in the sense of \cite{DBLP:journals/entcs/BasaldellaST10}) is regular  (see  Example \ref{fun lud}).

\begin{definition}[Regular  functional] \label{fu def}
We say that $f$ is  \textbf{regular} if the following condition holds:

\vspace{-0.33cm}
\begin{equation}\tag*{$\triangle$}
\mbox{$\widehat{f}(p) \, \ort \, n$ \ \  if and only if \,    $p \, \ort \, \widehat{f}(n)$\enspace,\qquad for every  $p \in \cP$  and every $n \in \cN$\enspace.}
\end{equation}
 \end{definition}

We observe that our condition of regularity is   analogous to the condition  of  \emph{linearity} for maps in \cite{DBLP:conf/lics/LafontS91}. There,
maps between \emph{games} are said to be \emph{linear}
if they satisfies a similar condition (see Example \ref{fun chu}).
Unfortunately, the adjective
``linear" is already present in the vocabulary of ludics
 \cite{DBLP:journals/tcs/Terui11,DBLP:journals/entcs/BasaldellaST10},
and  it denotes a property of designs which has nothing to do with the condition above.  To avoid any sort of confusion, we decided to introduce a different  terminology.
 We also remark that  in the standard terminology for Chu spaces, the condition of linearity of  \cite{DBLP:conf/lics/LafontS91}  is commonly called \emph{adjunction} condition.

We now show some  equivalent characterization of the notion of regularity.
Before doing this, it is convenient to introduce some now terminology.

\begin{definition}[Various properties of functionals]
We say that:

\sqi
\item $f$ is  \textbf{semiregular in $\cO$}
\,\! if \,\!  $\widehat{f}(a) \, \ort \, b$ \ \ implies \ \   $a \, \ort \, \widehat{f}(b)$ \,\!, \ for every  $a \in \cO$  and every $b \in \overline{\cO} \!\;; $
\item $f$ is  \textbf{$\rightarrow \, \leftarrow$ in $\cO$} \,\! if \,\! $ f^\rightarrow(X)\b \subseteq f^\leftarrow(X\b) $ \,\!,  \!\,
 for every $X \subseteq \cO$ \!\,;
 \item $f$ is  \textbf{$\leftarrow \, \rightarrow$ in $\cO$} \,\! if \,\! $   f^\leftarrow(X\b) \subseteq f^\rightarrow(X)\b $ \,\!,  \!\,
 for every $X \subseteq \cO$ \!\,;
  \item $f$ is  \textbf{good in $\cO$} \;\!\!\;\! if \,\! $ f^\rightarrow(X)\b =  f^\leftarrow(X\b) $ \,\!,  \!\,
 for every $X \subseteq \cO$ \!\,. \hfill $\triangle$

\sqe

 \end{definition}

Note that we  have the following equivalences:

\medskip

{\centering
$
\begin{array}{lcl}
  \mbox{ $f$ is  regular}
 & \mbox{if and only if} &
\hbox{$f$
is semiregular in $\cO$ and semiregular in $\overline{\cO} \enspace; $} \smallskip
\vspace{-0.05cm}
\\
  \mbox{ $f$ is good in $\cO$}
& \mbox{if and only if} &
\hbox{$f$ is
$\rightarrow \, \leftarrow$ in $\cO$ and   $ \leftarrow \, \rightarrow \,$ in $\cO$\enspace.}
  \end{array}
$

\par }

 \begin{proposition} \label{prop prop}
The following claims are equivalent.

\smallskip

{\centering
$
\begin{array}{rlcrlcrl}
\hbox{\emph{(1)}} &  \mbox{$f$ is semiregular in $\cO$ \!\,;}
 & \phantom{as} &
\hbox{\emph{(2)}} &   \mbox{$f$ is $\rightarrow \, \leftarrow$ in $\cO$ \!\,;}
 & \phantom{as} &
\hbox{\emph{(3)}} &  \mbox{$f$ is
$ \leftarrow \, \rightarrow $
in $\overline{\cO}$\enspace.}
  \end{array}
$

\par}

\vspace{-0.2cm}

\end{proposition}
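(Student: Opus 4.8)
The plan is to prove the three equivalences in a cycle: (1) $\Rightarrow$ (2) $\Rightarrow$ (3) $\Rightarrow$ (1), unwinding the definitions of \emph{semiregular}, $\rightarrow\,\leftarrow$, and $\leftarrow\,\rightarrow$ and using the basic properties of orthogonality from Theorem~\ref{closed} together with Lemma~\ref{lem image}. The key observation that makes everything go through is the pointwise characterization of membership in an orthogonal set, \ie $b \in X^\bot$ iff $\orth{a}{b}$ for every $a \in X$ (Definition~\ref{orto}), combined with the fact (Theorem~\ref{closed}(9)--(10)) that $\orth{a}{b}$ iff $a \in \lb b \rb^\bot$ iff $b \in \lb a \rb^\bot$.

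First I would prove (1) $\Rightarrow$ (2). Assume $f$ is semiregular in $\cO$ and fix $X \subseteq \cO$; I must show $f^\rightarrow(X)^\bot \subseteq f^\leftarrow(X^\bot)$. Take $b \in f^\rightarrow(X)^\bot$, so $\orth{\widehat{f}(a)}{b}$ for every $a \in X$ (using Lemma~\ref{lem im} / the definition of image, every element of $f^\rightarrow(X)$ has the form $\widehat f(a)$). By semiregularity in $\cO$, this gives $\orth{a}{\widehat f(b)}$ for every $a \in X$, \ie $\widehat f(b) \in X^\bot$, which by definition of pre--image means $b \in f^\leftarrow(X^\bot)$. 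Next, (2) $\Rightarrow$ (3): assume $f$ is $\rightarrow\,\leftarrow$ in $\cO$ and fix $Y \subseteq \overline\cO$; I want $f^\leftarrow(Y^\bot) \subseteq f^\rightarrow(Y)^\bot$. The trick here is to instantiate hypothesis (2) at the set $X \eqdef f^\leftarrow(Y^\bot) \subseteq \cO$. Unwinding, $f^\rightarrow(f^\leftarrow(Y^\bot)) \subseteq Y^\bot$ by Lemma~\ref{lem image}(3), hence $Y^{\bot\bot} = (Y^\bot)^\bot \subseteq f^\rightarrow(f^\leftarrow(Y^\bot))^\bot = f^\rightarrow(X)^\bot$ by Theorem~\ref{closed}(2); and by (2) applied to $X$, $f^\rightarrow(X)^\bot \subseteq f^\leftarrow(X^\bot) = f^\leftarrow(f^\rightarrow(f^\leftarrow(Y^\bot))^\bot)$. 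I then chase these inclusions, using $Y \subseteq Y^{\bot\bot}$ and Lemma~\ref{lem image}(2), to land $f^\leftarrow(Y^\bot)$ inside $f^\rightarrow(Y)^\bot$; the bookkeeping is a little delicate but it is the same style of orthogonality-juggling as in the proof of Theorem~\ref{equiv}.

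Finally, (3) $\Rightarrow$ (1): assume $f$ is $\leftarrow\,\rightarrow$ in $\overline\cO$, \ie $f^\leftarrow(Z^\bot) \subseteq f^\rightarrow(Z)^\bot$ for every $Z \subseteq \overline\cO$, and I must show $f$ is semiregular in $\cO$, that is, $\orth{\widehat f(a)}{b}$ implies $\orth{a}{\widehat f(b)}$ for $a \in \cO$, $b \in \overline\cO$. The natural choice is to instantiate (3) at the singleton $Z \eqdef \lb b \rb \subseteq \overline\cO$. Suppose $\orth{\widehat f(a)}{b}$; then $\widehat f(a) \in \lb b \rb^\bot$ by Theorem~\ref{closed}(9), so $a \in f^\leftarrow(\lb b \rb^\bot)$. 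By (3), $a \in f^\rightarrow(\lb b \rb)^\bot = \lb \widehat f(b) \rb^\bot$ by Lemma~\ref{lem im}. By Theorem~\ref{closed}(9) again, $a \in \lb \widehat f(b) \rb^\bot$ means $\orth{a}{\widehat f(b)}$, as required.

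I expect the main obstacle to be the middle implication (2) $\Rightarrow$ (3), where the correct instantiation of the hypothesis is less obvious than in the other two cases and where one has to carefully compose two applications of Lemma~\ref{lem image}(2)--(3) with two applications of the antitonicity of $(\cdot)^\bot$. The other two implications reduce, after picking the right singleton, to a direct translation between the "pointwise" formulation of semiregularity and the "setwise" formulations via the characterizations $\orth{a}{b} \iff a \in \lb b\rb^\bot \iff b \in \lb a \rb^\bot$ and Lemma~\ref{lem im}, so they should be essentially mechanical once set up.
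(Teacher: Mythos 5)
Your proposal is correct and follows essentially the same route as the paper: the same cyclic decomposition $(1)\Rightarrow(2)\Rightarrow(3)\Rightarrow(1)$, the same translation between the pointwise and setwise formulations via Lemma~\ref{lem image}, Lemma~\ref{lem im} and Theorem~\ref{closed}, and the identical singleton instantiation $Z=\lb b\rb$ for $(3)\Rightarrow(1)$. The only (inessential) difference is in $(2)\Rightarrow(3)$, where the paper fixes $a\in f^\leftarrow(Y\b)$ and applies (2) to the singleton $\lb a\rb$ whereas you apply (2) to the whole set $X\eqdef f^\leftarrow(Y\b)$; your sketched chase does close up (from $Y\subseteq f^\leftarrow(X\b)$ one gets $f^\rightarrow(Y)\subseteq X\b$ by Lemma~\ref{lem image}(1),(3), hence $X\subseteq X\b\b\subseteq f^\rightarrow(Y)\b$), so nothing is missing.
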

\begin{proof}
(1) implies (2) : \ Let $X \subseteq \cO$, and let $b \in \overline{\cO}$. Assume that  $b \in f^\rightarrow(X)\b$. Then, $\orth{c}{b}$ for every $c \in f^\rightarrow(X)$, by definition of orthogonal set.
So, we have
 $  \orth{\widehat{f}(a)}{b}$   for every $a \in X$
by definition of image.
 Assume (1). We obtain
  $  \orth{a}{\widehat{f}(b)}$   for every $a \in X$. Hence, $\widehat{f}(b) \in X\b$
  by definition of orthogonal set.
  To conclude, we get $b \in
 f^\leftarrow(X\b)$ by definition of pre--image.

(2) implies (3) : \ Let $Y \in \overline{\cO}$, and let $a \in \cO$.
Assume $a \in f^\leftarrow(Y\b)$.
We have $\widehat{f}(a) \in Y\b$ by definition of pre--image.
Thus, $\lb \widehat{f}(a) \rb \subseteq Y\b$.  Hence,
$f^\rightarrow(\lb a \rb) \subseteq Y\b$ by Lemma \ref{lem im}.
So, $ Y\b \b \subseteq f^\rightarrow(\lb a \rb) \b$. Since
$Y \subseteq Y\b\b$, we obtain $Y \subseteq  f^\rightarrow(\lb a \rb) \b$.
Assume (2). Since $\lb a \rb \subseteq \cO$, we have $ f^\rightarrow(\lb a \rb) \b \subseteq  f^\leftarrow(\lb a \rb\b)$. So,
$Y \subseteq  f^\leftarrow(\lb a \rb\b)$.  By Lemma
\ref{lem image}(1) and (3), we have $ f^\rightarrow(Y) \subseteq f^\rightarrow\big(f^\leftarrow(\lb a \rb\b)\big) \subseteq \lb a \rb \b$.
Thus, $ f^\rightarrow(Y) \subseteq \lb a \rb\b$.
Hence,  $ \lb a \rb\b\b \subseteq f^\rightarrow(Y) \b  $.
As $a \in \lb a \rb \b\b$, we  conclude $a  \in f^\rightarrow(Y) \b$.

(3) implies (1) : \  Let $a\in \cO$, and let  $b \in \overline{\cO}$.
Assume that $\orth{\widehat{f}(a)}{b}$  holds, \ie $\widehat{f}(a)   \in \lb b\rb\b$.  From  this, we obtain
$ \lb \widehat{f}(a) \rb \subseteq \lb b \rb\b$.
By Lemma \ref{lem im}, we have
$ f^\rightarrow(\lb a \rb ) \subseteq \lb b \rb\b$.
By Lemma
\ref{lem image}(2) and (4), we have $ \lb a \rb \subseteq f^\leftarrow\big(f^\rightarrow(\lb a \rb ) \big) \subseteq f^\leftarrow\big(\lb b \rb\b\big)$.
So, $\lb a \rb \subseteq f^\leftarrow\big(\lb b \rb\b\big)$.
Assume (3).
As $\lb b \rb \subseteq \overline{\cO}$, we obtain
$f^\leftarrow\big(\lb b \rb\b\big) \subseteq
f^\rightarrow(\lb b \rb)\b$.
Thus, $ \lb a \rb \subseteq f^\rightarrow(\lb b \rb)\b$.
By Lemma \ref{lem im}, we have
$  f^\rightarrow(\lb b \rb) = \lb \widehat{f}(b) \rb$.  Hence, $ \lb a \rb\subseteq  \lb \widehat{f}(b) \rb\b$. Thus, we get  $a \in
 \lb \widehat{f}(b) \rb\b$, \ie   $\orth{a}{\widehat{f}(b)}$.
\end{proof}

 \begin{theorem}[Equivalent characterizations of regularity] \label{reg car}
The following statements are equivalent.

\smallskip

{\centering
$
\begin{array}{rlcrlcrl}
\hbox{\emph{(i)}} &  \mbox{$f$ is regular \!\,;}
 & \phantom{as} &
\hbox{\emph{(ii)}} &   \mbox{$f$ is good in $\cO$ \!\,;}
 & \phantom{as} &
\hbox{\emph{(iii)}} &  \mbox{$f$ is
good
in $\overline{\cO}$\enspace.}
  \end{array}
$

\par}

\vspace{-0.2cm}

\end{theorem}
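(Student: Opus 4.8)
The plan is to derive everything from Proposition \ref{prop prop}, which already connects semiregularity with the $\rightarrow \, \leftarrow$ and $\leftarrow \, \rightarrow$ conditions, by reading it twice: once as stated, and once with $\overline{\cO}$ playing the role of $\cO$. The latter is legitimate because $\overline{\cO}$ is again a member of $\lb \cP , \cN \rb$ and $\overline{\overline{\cO}} = \cO$, so all the development of Section \ref{functionals} applies verbatim with $\overline{\cO}$ in place of $\cO$.

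First I would record the instances of Proposition \ref{prop prop} that are needed. From the proposition as stated, the equivalence of its clauses (1) and (2) gives: $f$ is semiregular in $\cO$ if and only if $f$ is $\rightarrow \, \leftarrow$ in $\cO$. From the proposition read with $\overline{\cO}$ for $\cO$, the equivalence of clauses (1) and (3) gives: $f$ is semiregular in $\overline{\cO}$ if and only if $f$ is $\leftarrow \, \rightarrow$ in $\cO$. Symmetrically, the equivalence of (1) and (3) of the proposition as stated gives: $f$ is semiregular in $\cO$ if and only if $f$ is $\leftarrow \, \rightarrow$ in $\overline{\cO}$; and the equivalence of (1) and (2) read with $\overline{\cO}$ for $\cO$ gives: $f$ is semiregular in $\overline{\cO}$ if and only if $f$ is $\rightarrow \, \leftarrow$ in $\overline{\cO}$.

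Then for (i) $\Leftrightarrow$ (ii): by the first displayed equivalence following the definition of the various properties of functionals, $f$ is regular exactly when $f$ is semiregular in $\cO$ and semiregular in $\overline{\cO}$; by the two equivalences just recorded this is exactly "$f$ is $\rightarrow \, \leftarrow$ in $\cO$ and $\leftarrow \, \rightarrow$ in $\cO$", which by the second displayed equivalence is precisely "$f$ is good in $\cO$". For (i) $\Leftrightarrow$ (iii) I would run the same argument with $\cO$ and $\overline{\cO}$ swapped: $f$ is regular iff $f$ is semiregular in $\overline{\cO}$ and semiregular in $\cO$, iff $f$ is $\rightarrow \, \leftarrow$ in $\overline{\cO}$ and $\leftarrow \, \rightarrow$ in $\overline{\cO}$, iff $f$ is good in $\overline{\cO}$.

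I do not anticipate any genuine difficulty: all the mathematical content has already been absorbed into Proposition \ref{prop prop}. The only thing that requires attention is the bookkeeping --- keeping track of which of the three equivalent clauses of that proposition to invoke, and in which of the two instances ($\cO$ or $\overline{\cO}$) --- so the "hard part" here is merely notational hygiene.
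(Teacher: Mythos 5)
Your proposal is correct and takes essentially the same route as the paper: both reduce the theorem entirely to Proposition \ref{prop prop}, read once as stated and once with $\overline{\cO}$ in place of $\cO$, combined with the displayed equivalences characterizing ``regular'' and ``good.'' The only difference is organizational --- the paper proves the cycle (i)$\Rightarrow$(ii)$\Rightarrow$(iii)$\Rightarrow$(i) while you prove the two biconditionals (i)$\Leftrightarrow$(ii) and (i)$\Leftrightarrow$(iii) directly --- and your explicit remark that the $\overline{\cO}$-instance is legitimate because $\overline{\cO}\in\lb\cP,\cN\rb$ and $\overline{\overline{\cO}}=\cO$ is a point the paper leaves implicit.
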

\begin{proof}

(i) implies (ii) : \ Assume that $f$ is regular. Then, as $f$ is semiregular in $\cO$,
it follows that $f$ is
$\rightarrow \, \leftarrow$ in $\cO$
by
  Proposition \ref{prop prop}((1) implies (2)).  As $f$   is also semiregular in $\overline{\cO}$,
we have that $f$ is
$ \leftarrow \, \rightarrow$ in $\cO$
by
  Proposition \ref{prop prop}((1) implies (3)). As a consequence of this,  $f$ is good in $\cO$.

  (ii) implies (iii) : \ Suppose that $f$ is good in $\cO$. Then, since $f$ is $\rightarrow \, \leftarrow$ in $\cO$,
  we have that $f$ is
$ \leftarrow \, \rightarrow$ in $\overline{\cO}$
by
  Proposition \ref{prop prop}((2) implies (3)). Similarly, as $f$ is
$ \leftarrow \, \rightarrow$ in $\cO$,
we have that $f$ is  $\rightarrow \, \leftarrow$ in $\overline{\cO}$
by
  Proposition \ref{prop prop}((3) implies (2)). Therefore, $f$ is good in $\overline{\cO}$.

(iii) implies (i) : \
Finally, assume that $f$ is good in $\overline{\cO}$.
Then, as $f$ is  $\rightarrow \, \leftarrow$ in $\overline{\cO}$, we have that $f$ is semiregular in $\overline{\cO}$,
by  Proposition \ref{prop prop}((2) implies (1)). Analogously,
since $f$ is  $\leftarrow \, \rightarrow $ in $\overline{\cO}$, we have that $f$ is semiregular in $\cO$,
by  Proposition \ref{prop prop}((3) implies (1)). This shows that $f$ is regular.
\end{proof}

 We now show the main result of this section:
  regular functionals
are continuous
and preserve the relation of semantical consequence.

\begin{theorem}[Regularity] \label{cont}  Suppose that $f$ is a regular functional. Then,
\smallskip

{\centering

$f$  is continuous in  \!\,\! $\cP$ \!\,\! and
\!\,\! $\cN$ \!\,\! and   preserves the relations of semantical consequence $\, \LHD_{\cP} $
\!\,\! and \!\,\!  $\, \LHD_{\cN}$\enspace.

 \par}

 \vspace{-0.2cm}

\end{theorem}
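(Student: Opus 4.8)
The plan is to derive the whole statement from three results already proved: the equivalence of regularity with goodness (Theorem~\ref{reg car}), the equivalence of continuity with preservation of semantical consequence (Theorem~\ref{equiv}), and the fact that an orthogonal set is automatically closed (Theorem~\ref{closed}(6)). First I would point out the reduction: by Theorem~\ref{equiv} ((1) iff (4)), $f$ is continuous in $\cO$ precisely when $f$ preserves $\sco$, so it is enough to prove that a regular functional is continuous in $\cO$; instantiating this with $\cO = \cP$ and with $\cO = \cN$ then yields continuity in $\cP$ and $\cN$, and, via Theorem~\ref{equiv} once more, preservation of $\LHD_\cP$ and $\LHD_\cN$.

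So suppose $f$ is regular and fix $X \subseteq \cO$. By Theorem~\ref{reg car}, regularity is equivalent to $f$ being good in $\overline{\cO}$, i.e.\ $f^\rightarrow(Y)\b = f^\leftarrow(Y\b)$ for every $Y \subseteq \overline{\cO}$. The key move is to apply this identity not to a subset of $\cO$ but to $Y \eqdef X\b$, which indeed lies in $\overline{\cO}$; this gives
\[
f^\leftarrow(X\b\b) \;=\; f^\leftarrow\big((X\b)\b\big) \;=\; f^\rightarrow(X\b)\b .
\]
Now $f^\rightarrow$ respects polarity because $\widehat{f}$ does, so $f^\rightarrow(X\b) \subseteq \overline{\cO}$, and hence its orthogonal $f^\rightarrow(X\b)\b$ is a closed set in $\cO$ by Theorem~\ref{closed}(6). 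Consequently $f^\leftarrow(X\b\b)$ is a closed set in $\cO$; since $X$ was arbitrary, $f$ is continuous in $\cO$.

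Applying this with $\cO = \cP$ and $\cO = \cN$ (legitimate because $\cO$ was fixed but arbitrary in $\lb \cP, \cN \rb$, and Theorem~\ref{reg car} makes regularity equivalent to goodness in $\cP$ as well as to goodness in $\cN$) completes the argument, preservation of $\LHD_\cP$ and $\LHD_\cN$ being inherited from continuity by Theorem~\ref{equiv}. I do not anticipate a real obstacle: the one subtlety is the direction of the reduction — one must feed goodness the \emph{orthogonal} set $X\b$ rather than $X$ itself, and recall that orthogonal sets are always closed (Theorem~\ref{closed}(6)), which is exactly what turns the goodness identity into the desired closure statement.
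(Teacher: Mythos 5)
Your proof is correct and follows essentially the same route as the paper's: both reduce everything to goodness in $\overline{\cO}$ (Theorem~\ref{reg car}), apply the identity $f^\rightarrow(Y)\b = f^\leftarrow(Y\b)$ to $Y = X\b$, observe that the resulting orthogonal set is closed, and then invoke Theorem~\ref{equiv} to pass from continuity to preservation of $\sco$. The only cosmetic difference is that you cite Theorem~\ref{closed}(6) directly where the paper verifies closedness by the explicit chain $f^\leftarrow(X\b\b)\b\b = (f^\rightarrow(Y)\b)\b\b = f^\rightarrow(Y)\b = f^\leftarrow(X\b\b)$.
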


\begin{proof}
Suppose that $f$ is regular. Let
    $X \subseteq \cO$, and  let $Y \eqdef X\b$. By Theorem \ref{reg car}((i) implies (iii)), we have that $f$ is good in $\overline{\cO}$. So,
$f^\rightarrow(Y)\b =   f^\leftarrow(Y\b)$.
Thus,  $  f^\leftarrow(X\b\b) \b\b = f^\leftarrow(Y\b)\b\b =
( f^\rightarrow(Y)\b)\b\b =
 f^\rightarrow(Y)\b =
f^\leftarrow(Y\b) =
 f^\leftarrow(X\b\b) $.
This show that  $f$
is continuous in $\cO$.
Now,
by Theorem \ref{equiv}((1) implies (4)) we obtain that $f$ preserves the relation of semantical consequence $\, \LHD_{\cO} \, $.
\end{proof}

Finally, we   observe that
regularity is a concept which is stronger than continuity. Namely, we show
that continuity
in  $\cP$ \emph{and} $\cN$ does \emph{not} implies regularity in general. See Example \ref{ex fin1}(a) below.

\begin{example} \label{ex fin1} In the same notation
of Example \ref{ex con} and Example \ref{ex fin}, we have:
\sqi
\item[(a)] The functional $\sharp$ is continuous in $\cP_I$
\emph{and} $\cN_I$. However,
as $\widehat{\sharp}((0,P)) = (1,P) $ and
$\widehat{\sharp}((1,N)) = (1,N) $,  we have
$\, \widehat{\sharp}((0,P)) \, \bot_I \,  (1,N) \,$ and
$\, (0,P) \, \not\!\!\!\bot_I  \,  \widehat{\sharp}((1,N)) \,$. Therefore, the functional
$\sharp$ is \emph{not} regular.

\item[(b)] The functional $\flat$ is not continuous
in $\cP_I$ and hence, by Theorem \ref{cont}, it   cannot be regular.

\item[(c)] The functional $\natural$ is continuous in $\cP_I$
and $\cN_I$. It  is also regular,
as we have
\smallskip

{
\centering{
$ \, \widehat{\natural}((r,P)) \, \bot_I \,  (s,N) \,$ \ \ \ iff \ \ \
$ \, (r,P) \, \bot_I \,  (s,N) \,$ \ \ \  iff \ \ \ $ \, (r,P) \, \bot_I \,  \widehat{\natural}((s,N))$\enspace,}
\par }

\smallskip

\noindent for every $(r,P) \in \cP_I$
and every $(s,N) \in \cN_I$.  \hfill $\triangle$
\sqe\end{example}

\section{Examples: Boolean--Valued Games and Ludics} \label{example triads}

 We now give two ``abstract" examples of triad.

\begin{example}[Boolean--valued games] \label{games}  A \emph{Boolean--valued game} \cite{DBLP:conf/lics/LafontS91} (or \emph{Boolean--valued Chu space} \cite{DBLP:conf/lics/Pratt95}) can be presented  as an ordered  triple
$\sz = (\sP, \sO,\sr)$,  where $\sP$ (``strategies", ``points") and $\sO$ (``co--strategies", ``open sets")
are sets, and $\sr$ is a subset of $\sP \times \sO$, \ie a relation
from $\sP$ to $\sO$. Given
 $x \in \sP$ and $y \in \sO$,
 we also write $\, x \, \sr \, y\,$ for
$(x,y) \in \sr$ in the sequel.

Every  Boolean--valued game with $\sP$
and $\sO$  disjoint is a triad in our sense.
 \hfill $\triangle$
 \end{example}

\begin{example}[Ludics] \label{ludic}
 We now  show that ludics fits into our general setting. For convenience,
 we consider  ludics as formulated  in  \cite{DBLP:journals/tcs/Terui11}. In
 Subsection 2.1 of \cite{DBLP:journals/entcs/BasaldellaST10} the reader can find all the notions required to understand this example and Example \ref{fun lud}.
 Let $ \cA$ be  a   signature in the sense of  \cite{DBLP:journals/tcs/Terui11,DBLP:journals/entcs/BasaldellaST10}. Consider
the triple $\bA \eqdef (\cP_\bA,\cN_\bA, \ort_\bA)$
 where:
 \squishlist
 \item $\cP_\bA$ is the set of all linear, cut--free and positive designs with at most $x_0$ as free variable (\ie the set of the positive atomic designs
of \cite{DBLP:journals/entcs/BasaldellaST10} \emph{augmented by} $\Omega$)
over the signature
 $\cA$.

\item $\cN_\bA$ is the set of all linear, cut--free and negative designs without  free variables (\ie the set of the negative atomic designs of \cite{DBLP:journals/entcs/BasaldellaST10})
over the signature
 $\cA$.

 \item For $p \in  \cP_\bA$ and  $n \in \cN_\bA$, we set

{
\centering{
$p \, \ort_\bA \, n \ \reldef \  \ [\![ \; p[n/x_0] \; ]\!] = \maltese$\enspace,}

\par }

\noindent   where by   ``$[\![ \;  \; ]\!]$" we denote    the  \emph{normal form function} (see \cite{DBLP:journals/tcs/Terui11,DBLP:journals/entcs/BasaldellaST10}).
This orthogonality relation corresponds to the original orthogonality relation of ludics.
 \squishend
Since in ludics
the sets
 $ \cP_\bA$ and   $
  \cN_\bA$ are  disjoint, the triple  $\bA$ is a triad in our sense.
(In this example, $\Omega$ is a member of $\cP_\bA$.  In ludics, this situation is usually not allowed. The only differences | \wrt more traditional presentations of ludics | are the following:
  (i) the set $\emptyset_{\cN_\bA}$ is a closed set in $\cN_\bA$, because $\emptyset_{\cN_\bA}= \lb \Omega \rb^{\bot_\bA} $;
  (ii)
 the set  $\cP_\bA$  is the unique closed set in $\cP_\bA$ which contains $\Omega$, because $\lb \Omega \rb^{\bot_\bA\bot_\bA} = (\emptyset_{\cN_\bA})^{\bot_\bA} = \cP_\bA$ and hence, $\lb \Omega \rb \subseteq X^{\bot_\bA\bot_\bA}$ implies $\cP_\bA  = X^{\bot_\bA\bot_\bA}$.)
 \hfill $\triangle$ \end{example}

We now give examples of collections of functionals for the triads given in the previous two examples.

\begin{example}[Linear maps] \label{fun chu}
Let  $\sz = (\sP,  \sO,\sr)$ be a Boolean--valued game. A  \emph{linear map from $\sz$ to itself} is an ordered  pair $(\ssf, \ssg)$ of functions $\ssf : \sP \arr \sP$ and $\ssg: \sO \arr \sO$ such that

 \smallskip

{
\centering{
$\ssf(x)\, \ \sr \,\ y $  \ \ \ \ \ \   if and only if \ \ \ \  \ \ \ $\, x \,\ \sr \,\ \ssg(y) $\enspace,}

\par }

 \smallskip

\noindent
for every $x \in \sP$ and every $y \in \sO$.
(Here we are following the terminology introduced in
\cite{DBLP:conf/lics/LafontS91}.
We also mention that  in the standard terminology for Chu spaces, linear maps are also called \emph{Chu transforms}.)

Let $\sz = (\sP,  \sO, \sr)$ be a Boolean--valued game
with $\sP$ and $\sO$ disjoint. Recall that $\dom(\sz) = \sP \cup \sO$.
Let $L \eqdef (\cF_L,\widehat{\phantom{a}}^L)$, where:
\sqi\item
$\cF_L$ is  \emph{any subset} of  the set of all linear maps
from $\sz$ to itself.
\item $\widehat{\phantom{a}}^L$  maps each functional
  $(\ssf,\ssg) \in \cF_L$ to the function $\widehat{(\ssf,\ssg)}^L$
 from $\dom(\sz)$ to $\dom(\sz)$ given by:

\smallskip

{\centering
$\widehat{(\ssf,\ssg)}^L\!\!\!(z)   \ \eqdef \ $ $\left\{
  \begin{array}{ll}
\ssf(z) &  \hbox{if $z \in \sP$} \\
\ssg(z) &  \hbox{if $z \in\sO$\enspace, \quad    for   $z \in \dom(\sz)$\enspace.}
\end{array}
\right.$
\par }

\smallskip

  \sqe
We claim that \emph{$L$ is a collection of functionals for $\sz$} and that \emph{all functionals
in $\cF_L$ are regular}.  To prove these claims, let
$(\ssf,\ssg) \in \cF_L$,  $x \in \sP$
and $y \in \sO$.

By definition, we have  $\widehat{(\ssf,\ssg)}^L\!\!\!(x)   = \ssf(x) \in \sP$  and
$\widehat{(\ssf,\ssg)}^L\!\!\!(y)   = \ssg(y) \in \sO$, and this shows that preservation of polarity holds. 
As for regularity, we have that

\vspace{-0.75cm}

{\centering
\begin{equation}\tag*{$\triangle$}
\mbox{$\widehat{(\ssf,\ssg)}^L\!\!\!(x) \,\ \sr \,\ y\,$  \ \ \  iff \ \  \
$\, \ssf(x)\,\ \sr \,\ y \, $ \ \ \  iff \ \ \  
  $\, x \,\ \sr \,\ \ssg(y) \, $  \ \  \ iff \ \ \   $\, x \,\  \sr \,\  \widehat{(\ssf,\ssg)}^L\!\!\!(y)\,\enspace.$}
\end{equation}
\par}
\end{example}

\begin{example}[Functionals in ludics] \label{fun lud}Let $\bA$ be the triad defined in Example \ref{ludic}. 
Recall that $\dom(\bA) = \cP_\bA \cup \cN_\bA$. Consider the pair $H = (\cF_H,\widehat{\phantom{a}}^H)$, where:
\sqi\item
$\cF_H$
 is  \emph{any subset} of  the set of all linear, cut--free and negative designs with at most $x_0$ as   free variable (\ie the set of  \emph{functionals},   in the sense of  Subsection 2.2 of  \cite{DBLP:journals/entcs/BasaldellaST10})
over the signature $\cA$.

\item $\widehat{\phantom{a}}^H$
 maps each functional
  $g \in \cF_H$ to the function $\widehat{g}^H$
 from $\dom(\bA)$ to $\dom(\bA)$ given by:

\smallskip

{\centering
$\widehat{g}^H\!(a)   \ \eqdef \ $ $\left\{
  \begin{array}{ll}
{[\![ \; a[g/x_0] \; ]\!]} &  \hbox{if $a \in \cP_\bA$} \\
{[\![ \; g[a/x_0] \; ]\!]} &  \hbox{if $a \in\cN_\bA$\enspace, \quad for  $a\in \dom(\bA)$\enspace.
}
\end{array}
\right.$
\par }

\smallskip

  \sqe
We now  claim  that $H$ \emph{is a collection of functionals for $\bA$} and that \emph{all
functionals in $\cF_H$ are regular}.
To show these claims,  let $g \in \cF_H$,  $p \in \cP_\bA$ and
 $n \in \cP_\bA$.

Since    $\cP_\bA$ and $\cN_\bA$ are disjoint, it follows from 
the definition of the normal form function  that $\widehat{g}^H$
is  a well--defined function from $\dom(\bA)$ to itself.
 (It would be only a \emph{partial} function if $\Omega \notin \cP_\bA$,
as there are  many $q \in \cP_\bA$ | different from $\Omega$ |  and $h \in \cF_H$ | in case $\cF_H$ is the set of all functionals of \cite{DBLP:journals/entcs/BasaldellaST10} | such that $\![\![ \; q[h/x_0] \; ]\!] = \Omega$. Exactly for
this reason, we  included $\Omega$ in $\cP_\bA$.)
Furthermore,   $\widehat{g}^H\!(p)  \in \cP_\bA$  and  $\widehat{g}^H\!(n) \in \cN_\bA$
again follow from the definition of normal form function.
 This shows that preservation of polarity holds. 
As for regularity, this property  is a consequence of the   \emph{associativity}
of normalization (see \eg \cite{DBLP:journals/entcs/BasaldellaST10}).
(This fact has also been observed, without  proof,  in
\cite{DBLP:journals/entcs/BasaldellaST10}: see Lemma 2.4 and Equation (1) in  Section 5 of \cite{DBLP:journals/entcs/BasaldellaST10}.)
Indeed, we have:

\smallskip

{ \centering
$\begin{array}{rclr}
\widehat{g}^H\!(p) \, \ort_\bA \, n   & \mbox{iff}  &  [\![ \; [\![ \; p[g/x_0] \; ]\!] \; [n/x_0] \; ]\!]   =  \maltese    & \mbox{(by definition of $\bot_\bA$ and $\widehat{\phantom{a}}^H$)} \\
 & \mbox{iff}  & [\![ \; [\![ \; p[g/x_0] \; ]\!] [\; [\![ \; n \; ]\!]/x_0] \; ]\!]   =  \maltese   & \mbox{(because $n$ is cut--free)} \\
 & \mbox{iff}  & [\![ \;  p[g/x_0] [n /x_0] \; ]\!]   =  \maltese   & \mbox{(by associativity)} \\
 &\mbox{iff}   &  [\![ \;  p[g[n /x_0]/x_0] \; ]\!]   =  \maltese &   \mbox{(by substitution)} \\
  &\mbox{iff}   &  [\![ \;   [\![ \;p \; ]\!] [ \; [\![ \;g[n /x_0] \; ]\!]/x_0] \; ]\!]   =  \maltese   &   \mbox{(by associativity)} \\
   &\mbox{iff}   &  [\![ \;  p  [ \; [\![ \;g[n /x_0] \; ]\!]/x_0] \; ]\!]   =   \maltese &   \mbox{(because $p$ is cut--free)} \\
      &\mbox{iff}   &   p \, \ort_\bA \, \widehat{g}^H\!(n)\enspace.   &   \mbox{(by definition of $\bot_\bA$ and  $\widehat{\phantom{a}}^H$)}
\end{array}$

\vspace{-0.435cm}

\hfill $\triangle$

\par}

\end{example}

\section{Conclusion} \label{conclusions}

In this paper, we introduced the notion of triad in order to study, analyze, discover and  rediscover some properties
which hold in ludics  from a more abstract and general perspective.

In particular,   by applying of our abstract results to the concrete setting of ludics we arrive at the following conclusion.

\begin{theorem}[Abstract results on triads and functionals applied to ludics] \label{thm fin} In the notation  and terminology of Example \emph{\ref{ludic}} and
Example \emph{\ref{fun lud}}, we have:

\sqi
\item The pairs $(\cP_\bA\, , \, \LHD_{\cP_\bA}\, ) $ and $(\cN_\bA \, , \, \LHD_{\cN_\bA}\,)$
are entailment systems;
\item Functionals in ludics  are regular and therefore:
\vspace{-0.1cm}
\sqqi
\item[$\bullet$] Functionals in ludics
are continuous in $\cP_\bA$ and $\cN_\bA$;
\item[$\bullet$] Functionals in ludics preserve the relations of semantical consequence
$\, \LHD_{\cP_\bA}\, $ and $\, \LHD_{\cN_\bA}\, $ and thus:
\sqqi
\vspace{-0.6cm}
\item[$\bullet$]  Functionals in ludics preserve the relations of specialization
$\, \lhd_{\cP_\bA}\, $ and $\, \lhd_{\cN_\bA}\, $.
\sqqe
\sqqe
   \sqe
\end{theorem}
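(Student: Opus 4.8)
The plan is to derive the entire statement by instantiating the abstract results of Sections~\ref{closed sets} and~\ref{functionals} at the concrete triad $\bA$ and the collection of functionals $H$ of Examples~\ref{ludic} and~\ref{fun lud}; essentially no new argument is required beyond checking that the hypotheses of those results hold in the ludics setting.

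For item~1, I would recall from Example~\ref{ludic} that $\bA = (\cP_\bA,\cN_\bA,\ort_\bA)$ is a triad in our sense, and then apply Theorem~\ref{thm ent} twice, instantiating the fixed set of terms $\cO$ once as $\cP_\bA$ and once as $\cN_\bA$. This yields immediately that $(\cP_\bA,\LHD_{\cP_\bA})$ and $(\cN_\bA,\LHD_{\cN_\bA})$ are entailment systems.

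For item~2, I would first invoke Example~\ref{fun lud}, where $H=(\cF_H,\widehat{\phantom{a}}^H)$ is shown to be a collection of functionals for $\bA$ and, crucially, every functional in $\cF_H$ is \emph{regular}. Given this, fix any $f \in \cF_H$. Since $f$ is regular, Theorem~\ref{cont} gives at once that $f$ is continuous in $\cP_\bA$ and in $\cN_\bA$, and that $f$ preserves the relations of semantical consequence $\LHD_{\cP_\bA}$ and $\LHD_{\cN_\bA}$. Finally, continuity of $f$ in $\cP_\bA$ (respectively $\cN_\bA$) together with Corollary~\ref{coroll} gives preservation of the relation of specialization $\lhd_{\cP_\bA}$ (respectively $\lhd_{\cN_\bA}$), which completes the chain of implications displayed in the statement.

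Since each step is a citation of an earlier result, the only substantial work --- and hence the main obstacle --- lies entirely in Example~\ref{fun lud}: verifying that the ludics orthogonality relation makes the ludics functionals regular, which is where associativity of the normal form function (together with the bookkeeping of cut-freeness and substitution) is genuinely used. I would also point out that including $\Omega$ in $\cP_\bA$, as done in Example~\ref{ludic}, is precisely what makes $\widehat{g}^H$ a \emph{total} function $\dom(\bA)\to\dom(\bA)$, so that it legitimately qualifies as the interpretation of a functional in the sense of Definition~\ref{prefu def}; without this, the hypotheses of the abstract theorems would not strictly be met.
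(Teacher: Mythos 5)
Your proposal is correct and matches the paper's proof exactly: the paper likewise proves the theorem purely by citing Example~\ref{ludic}, Example~\ref{fun lud}, Theorem~\ref{thm ent}, Theorem~\ref{cont} and Corollary~\ref{coroll}. Your added observations --- that the genuine work is the regularity verification in Example~\ref{fun lud}, and that including $\Omega$ in $\cP_\bA$ is what makes $\widehat{g}^H$ total --- are accurate and consistent with the paper's own remarks.
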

\begin{proof} By
Example \ref{ludic},
Example \ref{fun lud}, Theorem \ref{thm ent} , Theorem \ref{cont} and Corollary \ref{coroll}.
\end{proof}

To conclude the paper, we observe that from the point of view of ludics
our paradigmatic vision of \emph{designs as terms} is somehow limited because
here terms only correspond to \emph{atomic} designs. Even though  terms  capture
the most important class of designs (in the opinion of the present author), in ludics there
are  plenty of non--atomic designs which  do not fit into our framework.
For future work, we plan to extend our setting in order to cope with them.

\nocite{*}
\bibliographystyle{eptcs}
\bibliography{basald}

\end{document}